\newtheorem{prop}{Proposition}
\newtheorem{proof}{Proof}
\journal{Journal of The Franklin Institute}
\begin{document}

\begin{frontmatter}



\title{Goal-oriented compression for $L_p$-norm-type goal functions: Application to power consumption scheduling}

 \author[label1]{Yifei Sun}
 \author[label2]{Hang Zou}
 \author[label3]{Chao Zhang}
 \author[label4,label5]{Samson Lasaulce}
 \author[label1]{Michel Kieffer}
 \affiliation[label1]{organization={Universite Paris-Saclay - CNRS - CentraleSupelec - L2S},
             city={Gif-sur-Yvette},
             postcode={F-91192},
            country={France}}

\affiliation[label2]{organization={Technology Innovation Institute},
            city={Abu Dhabi},
            country={UAE}}

\affiliation[label3]{organization={School of Computer Science and Engineering, Central South University},
            city={Changsha},
            postcode={410083}, 
            country={China}}
\affiliation[label4]{organization={Khalifa University},
            city={Abu Dhabi},
            postcode={127788}, 
            country={UAE}}

\affiliation[label5]{organization={CRAN, CNRS-Universite de Lorraine},
            city={Nancy},
            postcode={F-54000}, 
            country={France}}

\begin{abstract}
Conventional data compression schemes aim at implementing a trade-off between the rate required to represent the compressed data and the resulting distortion between the original and reconstructed data. However, in more and more applications, what is desired is not reconstruction accuracy but the quality of the realization of a certain task by the receiver. In this paper, the receiver task is modeled by an optimization problem whose parameters have to be compressed by the transmitter. Motivated by applications such as the smart grid, this paper focuses on a goal function which is of $L_p$-norm-type. The aim is to design the precoding, quantization, and decoding stages such that the maximum of the goal function obtained with the compressed version of the parameters is as close as possible to the maximum obtained without compression. The numerical analysis, based on real smart grid signals, clearly shows the benefits of the proposed approach compared to the conventional distortion-based compression paradigm. 

\end{abstract}


\begin{highlights}
\item General framework for designing compression methods for the $L_p$ norm minimization problem.
\item Novel linear and nonlinear transformation schemes by taking into account the performance degradation in terms of the $L_p$ norm induced by model reduction. 
\item Tailor the quantization rule to be goal-oriented by considering the impact of the precoding and the final use of the compressed data.
\item Evaluation of the proposed coding schemes with a real dataset and show the significant performance improvement compared to existing conventional transformation and quantization techniques.
\end{highlights}

\begin{keyword}
Data compression, Goal oriented communications, Quantization, Learning, Neural networks, Precoding.


\end{keyword}

\end{frontmatter}


\section{Introduction}
With the development of new paradigms such as the industrial internet, the internet of things (IoT), the smart grid, or networked controlled systems for instance, networks become more and more distributed. Information exchanges between the different devices are necessary to implement cooperation or coordination and achieve a given goal. A huge amount of data is often generated and transmitted, such as in the smart grid \citep{compressionsmartgridstcheou2014,smartgridgeneratingaiello2014}. Because of practical limitations in terms of communication and computational resources, it is important if not necessary to compress the exchanged data.

There exists a quite solid literature on the problem of data compression in the smart grid. Compared to lossless compression (see e.g., \citep{losslesskraus2012}), lossy compression achieves a higher compression ratio at the expense of degrading the accuracy of data. Reference \citep{gerek2008compression} transforms power quality event data to 2D, showing the correlation between far away sample, which help to exploit the redundancy of the data. \citep{efficientcompressionzhang2011} compresses the estimated fundamental component and transient component by different compression techniques. A two-step compression is proposed in \citep{adaptivemultivariatechowdhury2021}. The dimension of the data is firstly reduced by using principal component analysis. The preserved components are further compressed by using compressed sampling. Due to the limit in terms of rate, quantization schemes \citep{quantizationgray1998} are often used to minimize distortion \citep{lossyberger1998}. 

{\color{black} Most of the aforementioned lossy compression techniques try to minimize the mean square error (MSE) while compressing the data. Alternative distortion measures have been considered for a long time, see \citep{quantizationgray1998} and the references therein. Indeed, it has been known that the MSE may be not suited for assessing the performance of various image processing operations (\emph{e.g.}, image segmentation or pattern recognition)}. Over the past years, the signal processing and digital communications communities have realized that the distortion-based data compression paradigm should be revisited \citep{zhang2017payoff,shlezinger2019hardware,popovski2020semantic,whatsemanticlan2021,semanticspkalfa2021,6Gstrinati2021, gunduz2022beyond} by adapting the compression scheme to the final use, task, or goal pursued by the receiver. The attempt to incorporate aspects such as semantics and effectiveness into communication theory is in fact not new. For instance, reference \citep{bar1953semantic} adopts a probabilistic logic approach to study the semantic aspect whereas \citep{goldreich2012theory} develops a complexity theory approach of the goal-oriented communication problem. Concerning the signal processing and communication point of view, which is the one of interest for the present paper, the literature of goal-oriented communications is still in full development. For instance, a goal-oriented signal processing technique, based on a graph-based semantic language and a goal filtering method, is proposed and applied to specific goals in \citep{semanticspkalfa2021}. The goal-oriented/semantic communication concept has also been developed to design the next cellular communication generation (see e.g., \citep{6Gstrinati2021, kountouris2021semantics, chaccour2022less, getu2023making}).


Concerning the goal-oriented approach for the data compression problem, the literature is equally relatively small but contains some relevant works that the present work can be related to. For data compression also, the signal processing community, has also been aware well before the recent publications on the topic that large compression gains might be reaped when considering the final use of the signal or image (see, \emph{e.g.}, \citep{roucos1983segment,soong1989phonetically,hirata1989loobit,lopes200340} for the case of speech compression). The goal-oriented quantization problem is posed and formalized for the first time in \citep{zhang2017payoff} and is tackled in a deeper manner in \citep{zhang2021decision}, \citep{zou2018decision}, \citep{shlezinger2019hardware}, \citep{shlezinger2021deep}, and in \citep{zou2023goal}. In \citep{mostaani2022task}, the problem of goal-oriented quantization is studied for the control problem of rendez-vous in a multi-agent setting. As for the goal-oriented precoding or data preprocessing problem, it has been introduced independently in  \citep{shlezinger2019hardware} and \citep{sun2019new}. 

Compared to the most related works, the present paper provides significant progresses into the design of goal-oriented data source encoders. \textcolor{black}{The paper presents a novel task-oriented compression scheme when the task can be modeled by an optimization problem, which is a very relevant model for communications and energy systems. Our approach precisely assumes the task can be represented by a function (to be maximized) whose variables are the decisions to be taken and whose parameters have to be compressed. The focus of this paper is on the $L_p-$norm minimization problem, which is an important problem for example for smart grid applications, and in particular, for the fundamental problem of power consumption scheduling (\emph{e.g.}, to minimize the total consumed energy price, Joule losses, or the peak power). More specifically,} for the goal-oriented precoding or transform stage, the best linear solution is developed and compared to a non-linear solution which is based on a convolution auto-encoder. 
In contrast with \citep{shlezinger2019hardware}, the proposed precoding scheme exploits the structure of the decision function (an approximation of it to be precise). 
Also, a goal-oriented quantizer is used not to quantize the input signal (as in \citep{mostaani2022task, zou2023goal}) but the goal-oriented precoder output. Additionally, the used quantizer is not assumed to be a set of uniform scalar quantizers as in \citep{shlezinger2019hardware} and also it is tailored to the goal, which is an $L_p$-norm-type utility function. We also investigate the problem of knowing how the two compression stages interact each other, to understand to what extent it is possible to accumulate the gains of the goal-oriented paradigm when applied to the two stages. 
Last but not least, the schemes are applied to real smart grid measurements, which leads to a detailed numerical performance analysis and discussing the design of goal-oriented precoding and quantization. \textcolor{black}{The paper does not only provide implementable coding schemes but also provides quantitative elements behind the intuition that accounting for the impact of compression noise on the task is beneficial. Our approach allows one to provide e.g., an analytical characterization of an approximation of the optimal precoder, a goal-oriented quantization algorithm which works for an arbitrary utility function and not only for the $L_p$-norm. A purely data-driven approach (based on neural networks) would not provide these structural elements which can be exploited both for interpretations and making implementation easier.}

The rest of the paper is organized as follows. We formulate the problem in Section \ref{sec:Problem-formulation}. In Section \ref{sec:linearapproximation} a linear approximation (LT) is considered for the optimal decision function. A linear and nonlinear transformation (NLT) are proposed in Section \ref{sec:lineartransformation} and Section \ref{sec:nonlinear} respectively. We  describe the proposed goal-oriented quantizer in Section \ref{sec:goq}. An iterative approach optimizing the linear transformation stage and the goal-oriented quantization stage is proposed in Section \ref{sec:iteratively-improve}. Section \ref{sec:numerical} provides the numerical performance analysis. The paper is concluded in Section \ref{sec:conclusion}.

\section{Problem formulation}

\label{sec:Problem-formulation}

Consider the following utility function 
\begin{equation}
u\left(x;\ell\right)=-||x+\ell||_{p}\label{eq:Utility}
\end{equation}
where $x=\left[x_{1},x_{2},\dots,x_{N}\right]^{\mathrm{T}}\in\mathbb{R}_{+}^{N}$
represent the vector of decision variables, $\ell=\left[\ell_{1},\ell_{2},\dots,\ell_{N}\right]^{\mathrm{T}}\in\mathbb{R}_{+}^{N}$
is a vector of parameters, and $\|\cdot\|_{p}$ is the $L_{p}$-norm,
\emph{i.e.}, for any $v\in\mathbb{R}^{n}$, $\|v\|_{p}=\left(|v_{1}|^{p}+\dots+|v_{n}|^{p}\right){}^{1/p}$
with $p\geqslant1$. Assuming that the sum of the decision variables
is lower-bounded as
\begin{equation}
\sum_{j=1}^{N}x_{j}\geqslant E\label{eq:Constraintx}
\end{equation}
where $E>0$ is a constant and that \eqref{eq:Utility} has to be
maximized, one obtains the following optimization problem
\begin{equation}
\begin{aligned}\underset{x}{\mathrm{maximize}} & \ u\left(x;\ell\right)\\
\text{s.t.}\  & \sum_{j=1}^{N}x_{j}-E\geqslant0\\
 & x_{j}\geqslant0,\ j=1,\dots,N.
\end{aligned}
\label{eq:xstar}
\end{equation}
The solution of \eqref{eq:xstar} is denoted
as $x^{\star}\left(\ell\right)$. The optimization problem \eqref{eq:xstar}
can model typical resource allocation problems in several applications. For instance, in the power consumption scheduling problem of energy
systems, $\ell$ represents the non-controllable part of the energy consumption of a given household over $N$ time slots
(\emph{e.g.}, consumed by the lighting and cooking appliances, TV, computers), $x$ represents the controllable
part (\emph{e.g.}, the desired state-of-charge of an electric vehicle (EV)) to be allocated over 
the $N$ time slots by some decision-making entity (scheduler), the constraint
\eqref{eq:Constraintx} corresponds to the minimum amount of energy that has to be provided. The utility function thus corresponds to minus the $L_{p}$-norm of the total consumption vector. When $p$ becomes large, maximizing the considering utility amounts to minimizing the peak power. When $p=2$ and $\ell$ and $x$ are interpreted as currents instead of powers, then it corresponds to the Joule losses minimization problem. 

We assume that only an approximated version $\widehat{\ell}\in\mathbb{R}^{N}$ of the non-controllable load vector $\ell$ is available to the scheduler which takes the decision $x$. This assumption can be not only motivated by the existence of limitations in terms of communication or computational resources but also for a need in terms of privacy. A lossy compression technique
is thus implemented to remove some redundancy from the source signal $\ell$. Considering
$\widehat{\ell}$ in place of $\ell$ in \eqref{eq:xstar}, the resulting
solution becomes $x^{\star}\left(\widehat{\ell}\right)$.

 \textcolor{black}{
\begin{figure}[ht]
\centering\includegraphics[scale=1.1]{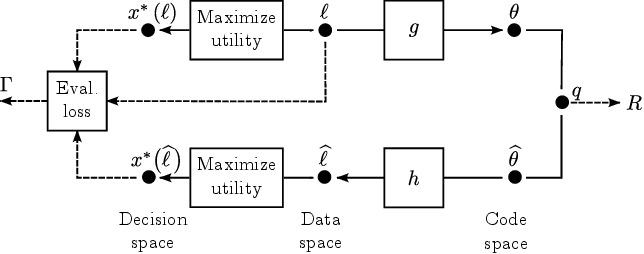} \caption{\textcolor{black}{Coding scheme targeting optimality loss minimization}}
\label{fig:Coding-scheme}
\end{figure}}


Instead of minimizing the reconstruction error as in conventional
compression schemes, this paper aims to design a compression scheme
for $\ell$ to mitigate the performance degradation resulting from
the distortion (compression noise) introduced on $\widehat{\ell}$,
\emph{i.e.}, to minimize the expected optimality loss 
\begin{equation}
\Gamma = \mathbb{E}_{\ell}\left[\left|u\left(x^{\star}\left(\ell\right);\ell\right)-u\left(x^{\star}\left(\widehat{\ell}\right);\ell\right)\right|^{2}\right],\label{eq:expoploss}
\end{equation}
where the expectation is over $\ell$. To define properly the arguments of $\Gamma$, it is first needed to define the different functions that need to be optimized. These functions appear in Figure~\ref{fig:Coding-scheme}. In this
setting, the compression noise stems both from precoding/decoding
and quantization. The precoding mapping 
\begin{equation}\label{eq:def-of-g}
\begin{array}{cccc}
g: & \mathbb{R}_{+}^{N} & \rightarrow & \mathbb{R}^{K}\\
 & \ell & \mapsto & \theta
\end{array}
\end{equation}
transforms the parameter vector $\ell$ to a vector $\theta$ of dimension
$K\leqslant N$, and the dimensionality reduction induces some information
loss. A quantizer 
\begin{equation}
\begin{array}{cccc}
q: & \mathbb{R}^{K} & \rightarrow & \mathbb{R}^{K}\\
 & \theta & \mapsto & \widehat{\theta}
\end{array}
\end{equation}
follows the precoder, which constitutes a second source of the compression
noise. The quantized vector $\widehat{\theta}$ is transmitted and
assumed received without error by the decision-making entity. A decoding
function 
\begin{equation}
\begin{array}{cccc}
h: & \mathbb{R}^{K} & \rightarrow & \mathbb{R}^{N}\\
 & \widehat{\theta} & \mapsto & \widehat{\ell}
\end{array}
\end{equation}
is then used to obtain an estimate $\widehat{\ell}$ of $\ell$.

The goal is to design the precoding, quantization, and decoding functions
that minimize the expected optimality loss \eqref{eq:expoploss}
\begin{equation}
\left(g^{\star},q^{\star},h^{\star}\right)\in\arg\min_{ \left(g,q,h\right)} \Gamma(g,q,h) =\mathbb{E}_{\ell}\left[\left|u\left(x^{\star}\left(\ell\right);\ell\right)-u(x^{\star}(\widehat{\ell});\ell)\right|^{2}\right]\label{eq:objective}
\end{equation}
where $\widehat{\ell}=h(q(g(\ell)))$.

Solving \eqref{eq:objective} directly is generally a hard task computationally speaking. This is the main motivation for searching for suboptimal solutions whose determination involve affordable complexity. In this paper, the functions $g$, $q$, and $h$ are therefore optimized separately.
More precisely, the precoding stage without quantization
noise is designed. In contrast with conventional linear transformation such as the Karhunen-Lo\`eve Transform (KLT), the precoding stage aims at minimizing the optimality loss in terms of the utility function $u$. The KLT is known to be the best linear transformed in terms of MSE but this optimality no longer holds for the optimality loss. Second,  the precoding/decoding
scheme is fixed and a goal-oriented quantization scheme is proposed to mitigate
the performance degradation brought by quantization noise. At last, the precoding/decoding stage and the quantization stage are optimized in an iterative manner. 


\section{Linear approximation of the optimal decision}

\label{sec:linearapproximation}

The impact of quantization on the optimality loss, as defined by (\ref{eq:expoploss}), can be seen to depend on the utility function $u$ and the optimal decision function $x^\star$. In \citep{zou2023goal}, it has been proved formally (in the high-resolution regime) how the regularity and smoothness properties of these functions impact the optimality loss. To be able to exploit the optimal decision function in the design of the precoding stage, we resort to a linear approximation of the former. In this section, we thus first characterize
the solution of the optimization problem \eqref{eq:xstar}. Then,
a linear approximation of this solution is provided to make the problem tractable.


\subsection{Optimal decision function}

For a given value of $\ell$, the solution of \eqref{eq:xstar}
is provided by Proposition~\ref{prop:opdecision} and involves a water-filling
approach, \textcolor{black}{which is widely used \citep{EVchargingBeaude2015,waterfillingshinwari2012,distributedwei2002}}.
\begin{prop}
\label{prop:opdecision} Consider a value of the parameter vector
$\ell$, and assume, without loss of generality, that 
\[
\ell_{1}\leqslant\dots\leqslant\ell_{j}\leqslant\dots\leqslant\ell_{N}.
\]
Then the components of $x^{\star}\left(\ell\right)$ are obtained
by
\begin{equation}
x_{j}^{\star}=\left(\mu-\ell_{j}\right)^{+},\label{eq:waterload}
\end{equation}
where 
\[
\mu=\frac{1}{n^{\star}}\left(E+\sum_{j=1}^{n^{\star}}\ell_{j}\right)
\]
indicates the water level, $\left(\cdot\right)^{+}=\max\left(\cdot,0\right)$,
and $n^{\star}$ is the number of non-zero components of $x^{\star}\left(\ell\right)$
evaluated as 
\begin{equation}
n^{\star}=\max_{n\leqslant N}\left\{ n:\left(n-1\right)\ell_{n}-\sum_{j=1}^{n-1}\ell_{j}\leqslant E\right\} .\label{eq:nstar}
\end{equation}
\end{prop}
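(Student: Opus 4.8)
The plan is to recognize \eqref{eq:xstar} as a convex program and to read off the water-filling form from its Karush--Kuhn--Tucker (KKT) conditions. First I would note that, since $x,\ell\in\mathbb{R}_+^N$ and $t\mapsto t^p$ is increasing on $\mathbb{R}_+$, maximizing $u(x;\ell)=-\|x+\ell\|_p$ is equivalent to minimizing $f(x)=\sum_{j=1}^N (x_j+\ell_j)^p$. Each summand is a convex increasing function composed with an affine map, so $f$ is convex (strictly so for $p>1$, which also gives uniqueness), the feasible set is polyhedral, and the affine constraint qualification holds; hence the KKT conditions are necessary and sufficient and pin down the global optimum. I would also argue upfront that the budget constraint is active, i.e. $\sum_j x_j^\star=E$: since $f$ is strictly increasing in every coordinate, if the constraint were slack one could decrease a strictly positive component, keep feasibility, and strictly lower $f$, contradicting optimality.

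Writing the Lagrangian with a multiplier $\lambda\ge 0$ for $E-\sum_j x_j\le 0$ and multipliers $\eta_j\ge 0$ for $-x_j\le 0$, stationarity reads $p(x_j+\ell_j)^{p-1}=\lambda+\eta_j$. For any coordinate with $x_j^\star>0$, complementary slackness forces $\eta_j=0$, so $x_j^\star+\ell_j$ equals the common value $\mu:=(\lambda/p)^{1/(p-1)}$; for any coordinate with $x_j^\star=0$ one gets $p\ell_j^{p-1}=\lambda+\eta_j\ge\lambda$, i.e. $\ell_j\ge\mu$. Combining the two cases yields exactly the water-filling expression $x_j^\star=(\mu-\ell_j)^{+}$ of \eqref{eq:waterload}, with $\mu$ the water level. (For $p=1$ the objective depends on $x$ only through $\sum_j x_j$, so every budget-tight feasible point is optimal and \eqref{eq:waterload} is one such solution; I would flag this degenerate case but not dwell on it.)

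It remains to identify the water level and the number $n^\star$ of active coordinates. Using the ordering $\ell_1\le\dots\le\ell_N$, the coordinates receiving positive allocation are the smallest ones, so substituting $x_j^\star=\mu-\ell_j$ for $j\le n^\star$ into the active budget $\sum_{j=1}^{n^\star}(\mu-\ell_j)=E$ gives $\mu=\frac{1}{n^\star}\bigl(E+\sum_{j=1}^{n^\star}\ell_j\bigr)$, the stated level. The consistency requirements are $\mu\ge\ell_{n^\star}$ (so that $x_{n^\star}^\star\ge 0$) and $\mu\le\ell_{n^\star+1}$ (so that $x_{n^\star+1}^\star=0$); a direct rearrangement shows these are respectively equivalent to $\phi(n^\star)\le E$ and $E\le\phi(n^\star+1)$, where $\phi(n):=(n-1)\ell_n-\sum_{j=1}^{n-1}\ell_j$ is the quantity appearing in \eqref{eq:nstar}.

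The key structural fact, and the step I expect to need the most care to state cleanly, is that $\phi$ is nondecreasing: a short computation gives $\phi(n+1)-\phi(n)=n(\ell_{n+1}-\ell_n)\ge 0$, precisely because $\ell$ is sorted. Monotonicity guarantees that $\{n:\phi(n)\le E\}$ is an initial segment $\{1,\dots,n^\star\}$ (nonempty, since $\phi(1)=0\le E$ as $E>0$), so the maximizer in \eqref{eq:nstar} is well defined; moreover its maximality yields $E<\phi(n^\star+1)$, which is the strict version of the second consistency inequality. Feeding this $n^\star$ back into the water-level formula produces a feasible point satisfying all KKT conditions, which by convexity is the global solution $x^\star(\ell)$, completing the argument. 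The only residual work is bookkeeping the boundary (strict versus non-strict) inequalities, noting that when $E=\phi(n^\star)$ the $n^\star$-th component is allocated exactly zero and the projection $(\cdot)^{+}$ absorbs this harmlessly, together with the edge case $n^\star=N$.
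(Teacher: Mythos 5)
Your proposal is correct and follows essentially the same route as the paper's own proof: recognizing \eqref{eq:xstar} as a convex program with affine constraints, applying KKT conditions to obtain the water-filling form $x_j^\star=(\mu-\ell_j)^{+}$, and then using the active budget $\sum_{j=1}^{n^\star}(\mu-\ell_j)=E$ together with the sorted ordering of $\ell$ to recover $\mu$ and characterize $n^\star$ via the condition in \eqref{eq:nstar}. Your additions --- the explicit argument that the budget constraint is active, the monotonicity of $\phi(n)=(n-1)\ell_n-\sum_{j=1}^{n-1}\ell_j$ (which justifies that the maximum in \eqref{eq:nstar} is well defined), and the flagged $p=1$ degeneracy --- are refinements of details the paper leaves implicit rather than a different approach.
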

\begin{proof} See~\ref{subsec:Proof-opdecison}. \end{proof}

\subsection{Linear approximation}

The solution provided by Proposition~\ref{prop:opdecision} has a
non-linear dependency in $\ell$, and this dependency is not explicit in general. To circumvent this difficulty we resort to a first order approximation of the optimal decision function. For this purpose, we evaluate the sensitivity of $x^{\star}\left(\ell\right)$
with respect to $\ell$, by considering the first-order Taylor expansion
of $x^{\star}$ around $\ell$ 
\begin{equation}
\begin{aligned}x^{\star}\left(\ell+d\ell\right) & =x^{\star}\left(\ell\right)+\boldsymbol{H}(\ell)d\ell+o\left(d\ell\right)\\
\end{aligned}
\end{equation}
where 
\[
\boldsymbol{H}(\ell)=\frac{\partial x^{\star}}{\partial\ell^{\mathrm{T}}}\left(\ell\right)
\]
is the Jacobian matrix of $x^{\star}\left(\ell\right)$ obtained using
Proposition~\ref{prop:linearapp}.
\begin{prop}
\label{prop:linearapp} Consider a value of the parameter vector $\ell$
such that $n^{\star}$ given by \eqref{eq:nstar} remains constant
over some neighborhood $\mathcal{N}\left(\ell\right)$ of $\ell$.
Without loss of generality, assume that 
\[
\ell_{1}\leqslant\dots\leqslant\ell_{n^{\star}}\leqslant\ell_{n^{\star}+1}\leqslant\dots\leqslant\ell_{N}.
\]
Then 
\begin{equation}
x^{\star}\left(\ell\right)= \boldsymbol{H}\left(\ell\right)\ell+b\left(\ell\right)
\label{eq:LinearApprox}
\end{equation}
with
\[
\begin{aligned} & \boldsymbol{H}\left(\ell\right)=\\
 & \left(\begin{array}{ccccccc}
-1+\frac{1}{n^{\star}} & \frac{1}{n^{\star}} & \text{\ensuremath{\cdots}} & \frac{1}{n^{\star}} & 0 & \cdots & 0\\
\frac{1}{n^{\star}} & -1+\frac{1}{n^{\star}} & \ddots & \vdots & \vdots &  & \vdots\\
\vdots & \ddots & \ddots & \vdots\frac{1}{n^{\star}} & \vdots &  & \vdots\\
\frac{1}{n^{\star}} & \cdots & \frac{1}{n^{\star}} & -1+\frac{1}{n^{\star}} & 0 & \cdots & 0\\
0 & \cdots & \cdots & 0 & 0 & \cdots & 0\\
\vdots &  &  & \vdots & \vdots & \ddots & \vdots\\
0 & \cdots & \cdots & 0 & 0 & \cdots & 0
\end{array}\right)
\end{aligned}
\]
and \textcolor{black}{} 
\[
b\left(\ell\right)=\left(\frac{E}{n^{\star}},\dots,\frac{E}{n^{\star}},0,\dots,0\right)^{\mathrm{T}}.
\]
\end{prop}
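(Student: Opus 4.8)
The plan is to read off the affine form directly from the closed-form solution given in Proposition~\ref{prop:opdecision}, and then to argue that the matrix appearing there is genuinely the Jacobian on the neighborhood $\mathcal{N}(\ell)$. First I would split the components of $x^\star(\ell)$ according to the active set. With the sorted ordering $\ell_1 \leqslant \dots \leqslant \ell_N$ assumed in the statement, Proposition~\ref{prop:opdecision} tells us that the strictly positive components of $x^\star(\ell)$ are exactly the first $n^\star$ ones, so that $x_j^\star = \mu - \ell_j$ for $j \leqslant n^\star$ and $x_j^\star = 0$ for $j > n^\star$, where $\mu = \frac{1}{n^\star}\bigl(E + \sum_{i=1}^{n^\star}\ell_i\bigr)$.

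Next I would handle the active components by substituting this expression for $\mu$ into $x_j^\star = \mu - \ell_j$ and collecting the coefficient of each $\ell_i$. This gives $x_j^\star = \frac{E}{n^\star} + \frac{1}{n^\star}\sum_{i=1}^{n^\star}\ell_i - \ell_j$ for $j \leqslant n^\star$. Reading off the coefficients, $\ell_j$ appears with coefficient $-1 + \frac{1}{n^\star}$, every other $\ell_i$ with $i \leqslant n^\star$ appears with coefficient $\frac{1}{n^\star}$, every $\ell_i$ with $i > n^\star$ has coefficient $0$, and the constant term is $\frac{E}{n^\star}$; this is precisely the $j$-th row of $\boldsymbol{H}(\ell)$ and the $j$-th entry of $b(\ell)$. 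For the inactive components $j > n^\star$, both the $j$-th row of $\boldsymbol{H}(\ell)$ and the $j$-th entry of $b(\ell)$ are zero, reproducing $x_j^\star = 0$. Stacking the rows then yields the identity $x^\star(\ell) = \boldsymbol{H}(\ell)\ell + b(\ell)$.

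Finally I would promote this pointwise identity to a statement about the Jacobian over the whole neighborhood. The hypothesis that $n^\star$ stays constant on $\mathcal{N}(\ell)$, combined with the fact that a strict sorted ordering is preserved under sufficiently small perturbations, ensures that the active index set $\{1,\dots,n^\star\}$ and the formula defining $\mu$ are unchanged throughout $\mathcal{N}(\ell)$. Since the entries of $\boldsymbol{H}$ and $b$ depend on $\ell$ only through $n^\star$, they are locally constant, so the map $\ell' \mapsto x^\star(\ell')$ is exactly affine on $\mathcal{N}(\ell)$; differentiating term by term then gives $\frac{\partial x^\star}{\partial \ell^{\mathrm{T}}}(\ell) = \boldsymbol{H}(\ell)$, consistent with the Taylor-expansion setup preceding the statement.

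The algebra above is routine; the hard part will be the last step, namely justifying the local constancy of the active set rather than taking it for granted. One must verify that holding $n^\star$ fixed genuinely keeps the same $n^\star$ indices active and keeps the water level an affine function of $\ell$, so that $x^\star$ is truly differentiable at $\ell$ and not sitting on a breakpoint where either the ordering or the cardinality of the active set jumps; away from such breakpoints the claimed affine representation, and hence the Jacobian, follows immediately.
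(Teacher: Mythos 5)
Your proposal is correct and follows essentially the same route as the paper's proof, which likewise partitions $\mathbb{R}_{+}^{N}$ into regions $\mathcal{M}_{\mathcal{I}}$ on which the active set of $x^{\star}$ is fixed, substitutes the water level $\mu=\frac{1}{n^{\star}}\bigl(E+\sum_{i=1}^{n^{\star}}\ell_{i}\bigr)$ into the water-filling formula of Proposition~\ref{prop:opdecision}, and reads off the rows of $\boldsymbol{H}(\ell)$ and entries of $b(\ell)$ exactly as you do. Your closing discussion of why the active set (and hence $\boldsymbol{H}$ and $b$) is locally constant on $\mathcal{N}(\ell)$ is, if anything, slightly more explicit than the paper's treatment, which handles this implicitly through the hypothesis that $n^{\star}$ is constant on the neighborhood and the fixed-active-set regions $\mathcal{M}_{\mathcal{I}}$.
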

\begin{proof} See~\ref{subsec:Proof-linearapp}. \end{proof}

\section{Linear goal-oriented precoding\label{sec:lineartransformation}}

In this section, we investigate the problem of the determination of the best linear precoding or transform to be applied to $\ell$, which is to approximate $\ell$ by a weighted sum of a given number of vectors of a certain basis; the number of vectors of this sum precisely corresponds to the number $K \in \{1,\dots,N\}$ defined by (\ref{eq:def-of-g}). It is known that, for a given number of basis vector, the KLT provides the best basis in the sense of the MSE $\mathbb{E}_{\ell}(\| \ell - \widehat{\ell} \|^2)$ (see, \emph{e.g.}, \citep{mallat1999wavelet}). Nevertheless, this optimality result is no longer true in the presence of an arbitrary goal and in particular for the optimality loss definition used in this paper. The motivation of this section is therefore to propose a linear precoding or transformation scheme which is matched to the $L_p$-norm function.  

In this section, to simplify the analysis, the effect of quantization noise is assumed to be negligible. Under these assumptions,
one has that $\theta=g\left(\ell\right)=\boldsymbol{B}\ell$, and $\widehat{\ell}=h\left(\theta\right)=\boldsymbol{B}^{\mathrm{T}}\theta$,
where $\boldsymbol{B}\in\mathbb{R}^{K\times N}$ is the precoding
matrix and $\boldsymbol{B}^{\mathrm{T}}$ the decoding matrix. Then the expected
optimality loss \eqref{eq:expoploss} becomes 
\begin{equation}
\Gamma\left(\boldsymbol{B}\right) = \mathbb{E}_{\ell}\left[\left|u\left(x^{\star}\left(\ell\right);\ell\right)-u(x^{\star}(\boldsymbol{B}^{\mathrm{T}}\boldsymbol{B}\ell);\ell)\right|^{2}\right].\label{eq:opt_loss_linear}
\end{equation}
Two notational remarks are in order at this point. First, notice that in the above, a small abuse of notation is employed for clarity. Since, the optimality loss function is only considered with respect to the precoding stage, the arguments $q$ and $h$ are removed from $\Gamma$, and since only linear precoding is considered, the precoding function $g$ is replaced with the matrix $\boldsymbol{B}$. Second, motivated by practical considerations, in which exact statistics are not available but one has only access to a set of measurements, the notation $\widehat{\Gamma}_T$ will be use to refer to the empirical version of the optimality loss function $\Gamma$. Assuming that a dataset $\mathcal{L}=\{\ell^{\left(1\right)},\ell^{\left(2\right)},\dots,\ell^{\left(T\right)}\}$ of $T$ realizations or samples of $\ell$ is available, the empirical optimality loss expresses as
\begin{equation}
\begin{aligned}\widehat{\Gamma}_T\left(\boldsymbol{B}\right)= & \frac{1}{T}\sum_{i=1}^{T}\left|u(x^{\star}(\ell^{(i)});\ell^{(i)})-u(x^{\star}(\boldsymbol{B}^{\mathrm{T}}\boldsymbol{B}\ell^{(i)});\ell^{(i)})\right|^{2}.\end{aligned}
\label{eq:GammaB}
\end{equation}
Using the linear form \eqref{eq:LinearApprox} introduced in Proposition~\ref{prop:linearapp},
one obtains
\begin{align}
\widehat{\Gamma}_T\left(\boldsymbol{B}\right) & =\frac{1}{T}\sum_{i=1}^{T}\left|u(\boldsymbol{H}\left(\ell^{(i)}\right)\ell^{(i)}+b\left(\ell^{(i)}\right);\ell^{(i)})\right.\nonumber \\
 & -\left.u\left(\boldsymbol{H}\left(\boldsymbol{B}^{\mathrm{T}}\boldsymbol{B}\ell^{(i)}\right)\boldsymbol{B}^{\mathrm{T}}\boldsymbol{B}\ell^{(i)}+b\left(\boldsymbol{B}^{\mathrm{T}}\boldsymbol{B}\ell^{(i)}\right);\ell^{(i)}\right)\right|^{2}.\label{eq:GammaB2}
\end{align}
To minimize $\widehat{\Gamma}_T$ we resort to a gradient descent algorithm. To simplify the computation procedure of the derivatives of $\widehat{\Gamma}_T$, it is assumed that a small variation of $\boldsymbol{B}$ does not change
the entries of $\boldsymbol{H}\left(\boldsymbol{B}^{\mathrm{T}}\boldsymbol{B}\ell^{(i)}\right)$
and $b\left(\boldsymbol{B}^{\mathrm{T}}\boldsymbol{B}\ell^{(i)}\right)$ for
all $i=1,\dots,T$. This assumption allows one to state the following proposition.
\begin{prop}
\label{prop:gradient} Consider $p\in\mathbb{N}^{+}$. The matrix
containing the derivatives of $\widehat{\Gamma}_T\left(\boldsymbol{B}\right)$
with respect to the components of $\boldsymbol{B}$ is
\begin{equation}
\nabla_{\boldsymbol{B}}\widehat{\Gamma}_T\left(\boldsymbol{B}\right)=\frac{1}{T}\sum_{i=1}^{T}C_{i}\left(\boldsymbol{B}\ell^{\left(i\right)}\beta_{i}^{\mathrm{T}}\boldsymbol{H}_{i}+\boldsymbol{B}\boldsymbol{H}_{i}^{\mathrm{T}}\beta_{i}\ell^{\left(i\right)T}\right)\label{eq:ybd}
\end{equation}
where 
\begin{equation}
C_{i}=2 \left(u\left(x^{\star}\left(\ell^{\left(i\right)}\right);\ell^{\left(i\right)}\right)-u\left(x^{\star}\left(\widehat{\ell}^{\left(i\right)}\right);\ell^{\left(i\right)}\right)\right)||x^{\star}\left(\widehat{\ell}^{\left(i\right)}\right)+\ell^{\left(i\right)}||_{p}^{1-p}
\end{equation}
\begin{equation}
\beta_{i}=\text{\ensuremath{\underbrace{\left(x^{\star}\left(\widehat{\ell}^{\left(i\right)}\right)+\ell^{\left(i\right)}\right)\odot\cdots\odot\left(x^{\star}\left(\widehat{\ell}^{\left(i\right)}\right)+\ell^{\left(i\right)}\right)}_{p-1}}}
\end{equation}
where $\odot$ indicates Hadamard product.

Moreover, when $p\rightarrow+\infty$, \eqref{eq:ybd} boils down
to 
\begin{equation}
\nabla_{\boldsymbol{B}}\widehat{\Gamma}_T\left(\boldsymbol{B}\right)=\frac{1}{T}\sum_{i=1}^{T}D_{i}\left(\boldsymbol{B}\ell^{\left(i\right)}s_{k\left(i\right)}^{\mathrm{T}}\boldsymbol{H}_{i}+\boldsymbol{B}\boldsymbol{H}_{i}^{\mathrm{T}}s_{k\left(i\right)}\ell^{\left(i\right)T}\right)\label{eq:ybdi}
\end{equation}
where 
\begin{equation}
D_{i}=2\left(u\left(x^{\star}\left(\ell^{\left(i\right)}\right);\ell^{\left(i\right)}\right)-u\left(x^{\star}\left(\widehat{\ell}^{\left(i\right)}\right);\ell^{\left(i\right)}\right)\right)
\end{equation}
\begin{equation}
s_{k\left(i\right)}=\left(\begin{array}{c}
0_{(k\left(i\right)-1)\times1}\\
1\\
0_{(N-k\left(i\right))\times1}
\end{array}\right)
\end{equation}
and
\[
k\left(i\right)=\arg\max_{k}x_{k}^{\star}\left(\widehat{\ell}^{\left(i\right)}\right)+\ell_{k}^{\left(i\right)}.
\]
\end{prop}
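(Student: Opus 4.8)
The plan is to read $\widehat{\Gamma}_T(\boldsymbol{B})$ as an average of squared terms in which only the second utility evaluation depends on $\boldsymbol{B}$, and then to differentiate by the chain rule, peeling off the outer square, then the $L_p$-norm, and finally the dependence of the norm's argument on $\boldsymbol{B}$ through $\widehat{\ell}^{(i)}=\boldsymbol{B}^{\mathrm{T}}\boldsymbol{B}\ell^{(i)}$. Fixing a sample index $i$, I would write $a_i=u(x^{\star}(\ell^{(i)});\ell^{(i)})$, which is constant in $\boldsymbol{B}$, and $f_i(\boldsymbol{B})=u(x^{\star}(\widehat{\ell}^{(i)});\ell^{(i)})=-\|w_i\|_p$ with $w_i:=x^{\star}(\widehat{\ell}^{(i)})+\ell^{(i)}$. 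Differentiating $(a_i-f_i)^2$ produces the scalar prefactor $2(a_i-f_i)$ multiplying $-\nabla_{\boldsymbol{B}}f_i=\nabla_{\boldsymbol{B}}\|w_i\|_p$; this factor $2(a_i-f_i)$ is exactly the bracket appearing in $C_i$ and $D_i$.

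Next I would compute the gradient of the norm itself. Treating the components of $w_i$ as nonnegative (consistent with the physical setting $x,\ell\geqslant 0$, which is also implicit in the sign-free definition of $\beta_i$), one has $\partial\|w_i\|_p/\partial w_{i,k}=\|w_i\|_p^{1-p}\,w_{i,k}^{p-1}$, so that $\nabla_{w_i}\|w_i\|_p=\|w_i\|_p^{1-p}\beta_i$, where $\beta_i$ is precisely the componentwise $(p-1)$-th power of $w_i$ given in the statement. The factor $\|w_i\|_p^{1-p}$ then completes $C_i$.

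The crucial step is the matrix calculus linking $w_i$ to $\boldsymbol{B}$. By the invariance assumption, $\boldsymbol{H}_i=\boldsymbol{H}(\widehat{\ell}^{(i)})$ and $b_i=b(\widehat{\ell}^{(i)})$ are frozen, so by the linear form of Proposition~\ref{prop:linearapp} we have $w_i=\boldsymbol{H}_i\boldsymbol{B}^{\mathrm{T}}\boldsymbol{B}\ell^{(i)}+(b_i+\ell^{(i)})$, depending on $\boldsymbol{B}$ only through the quadratic matrix $\boldsymbol{B}^{\mathrm{T}}\boldsymbol{B}$. Using $d(\boldsymbol{B}^{\mathrm{T}}\boldsymbol{B})=(d\boldsymbol{B})^{\mathrm{T}}\boldsymbol{B}+\boldsymbol{B}^{\mathrm{T}}(d\boldsymbol{B})$ gives the differential $d\|w_i\|_p=\gamma_i^{\mathrm{T}}\,(d\boldsymbol{B}^{\mathrm{T}}\boldsymbol{B}+\boldsymbol{B}^{\mathrm{T}}d\boldsymbol{B})\,\ell^{(i)}$ with $\gamma_i=\boldsymbol{H}_i^{\mathrm{T}}\nabla_{w_i}\|w_i\|_p=\|w_i\|_p^{1-p}\boldsymbol{H}_i^{\mathrm{T}}\beta_i$. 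Rewriting each of the two resulting scalar terms as a trace $\mathrm{tr}(\boldsymbol{G}^{\mathrm{T}}d\boldsymbol{B})$ via the cyclic property and the identity $\mathrm{tr}(\boldsymbol{X}\,d\boldsymbol{B}^{\mathrm{T}})=\mathrm{tr}(\boldsymbol{X}^{\mathrm{T}}d\boldsymbol{B})$, I would read off the two matrices $\boldsymbol{B}\ell^{(i)}\gamma_i^{\mathrm{T}}$ and $\boldsymbol{B}\gamma_i\ell^{(i)\mathrm{T}}$; substituting $\gamma_i$ and summing over $i$ against the prefactor $2(a_i-f_i)$ yields exactly \eqref{eq:ybd}. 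I expect this bookkeeping of transposes—ensuring $\boldsymbol{H}_i$ lands on the correct side so the two terms emerge as $\boldsymbol{B}\ell^{(i)}\beta_i^{\mathrm{T}}\boldsymbol{H}_i$ and $\boldsymbol{B}\boldsymbol{H}_i^{\mathrm{T}}\beta_i\ell^{(i)\mathrm{T}}$—to be the main obstacle, since it is where a stray transpose would silently corrupt the formula.

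Finally, for the limit $p\to+\infty$ I would isolate the combination $C_i\beta_i=2(a_i-f_i)\,\|w_i\|_p^{1-p}\beta_i$, whose $k$-th entry is $2(a_i-f_i)\,(w_{i,k}/\|w_i\|_p)^{p-1}$. Since $\|w_i\|_p\to\max_k w_{i,k}=w_{i,k(i)}$, each ratio $(w_{i,k}/\|w_i\|_p)^{p-1}$ tends to $1$ for $k=k(i)$ and to $0$ otherwise, so $C_i\beta_i\to D_i\,s_{k(i)}$ with $k(i)=\arg\max_k\bigl(x_k^{\star}(\widehat{\ell}^{(i)})+\ell_k^{(i)}\bigr)$; replacing $C_i\beta_i$ by $D_i s_{k(i)}$ in \eqref{eq:ybd} gives \eqref{eq:ybdi}. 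The only subtlety here is the genericity assumption that the maximizer $k(i)$ is attained at a single index, which is needed for the off-maximal entries to vanish in the limit.
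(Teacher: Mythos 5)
Your proof is correct, and for finite $p$ it coincides with the paper's argument: peel off the square to get the factor $2\bigl(u(x^{\star}(\ell^{(i)});\ell^{(i)})-u(x^{\star}(\widehat{\ell}^{(i)});\ell^{(i)})\bigr)$, use $d\|w_i\|_p=\|w_i\|_p^{1-p}\beta_i^{\mathrm{T}}dw_i$ (the paper gets this from the Hadamard-product expansion of $\|\cdot\|_p^p$, you from the componentwise derivative, identical under the nonnegativity of $w_i$ that you rightly flag), freeze $\boldsymbol{H}_i$ and $b_i$ so that $w_i=\boldsymbol{H}_i\boldsymbol{B}^{\mathrm{T}}\boldsymbol{B}\ell^{(i)}+b_i+\ell^{(i)}$, and extract the two terms via $d(\boldsymbol{B}^{\mathrm{T}}\boldsymbol{B})=(d\boldsymbol{B})^{\mathrm{T}}\boldsymbol{B}+\boldsymbol{B}^{\mathrm{T}}d\boldsymbol{B}$ and the same trace identities. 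One point where you do better than the appendix: you track the sign of $u=-\|\cdot\|_p$ consistently and land on the $+\frac{1}{T}\sum_i C_i(\cdots)$ of the statement, whereas the paper defines $\mathcal{G}_i=u(x^{\star}(\widehat{\ell}^{(i)});\ell^{(i)})$ but then evaluates it as $+\|w_i\|_p$, so its final displayed gradient in the proof carries a spurious leading minus sign relative to the proposition.

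Where you genuinely diverge is the case $p\rightarrow+\infty$. The paper does not take a limit at all: it re-derives the gradient directly for the max function, writing $\mathcal{F}_i=s_{k(i)}^{\mathrm{T}}\bigl(x^{\star}(\widehat{\ell}^{(i)})+\ell^{(i)}\bigr)$ and repeating the trace manipulations. You instead pass to the limit inside the finite-$p$ formula, showing entrywise that $C_i\beta_i\rightarrow D_i\,s_{k(i)}$ because $(w_{i,k}/\|w_i\|_p)^{p-1}\rightarrow 1$ at the maximizer and $\rightarrow 0$ elsewhere; this is sound when the maximizer is unique, since then $\|w_i\|_p^p=w_{i,k(i)}^p(1+\epsilon_p)$ with $\epsilon_p\rightarrow 0$ exponentially, so $(1+\epsilon_p)^{-(p-1)/p}\rightarrow 1$ (with $m$ tied maximizers each tied entry would instead receive weight $1/m$, as you anticipate). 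Your route more literally substantiates the claim that \eqref{eq:ybd} ``boils down to'' \eqref{eq:ybdi}, and makes explicit the genericity (unique-argmax) assumption that the paper only imposes implicitly by declaring a single maximal index $k(i)$; you also implicitly need $C_i$'s bracket to converge to $D_i$'s, which holds because $\|\cdot\|_p\rightarrow\|\cdot\|_\infty$ and the water-filling solution $x^{\star}$ of Proposition~\ref{prop:opdecision} does not depend on $p$. The paper's direct derivation, in exchange, is self-contained at $p=\infty$ and requires no limiting argument. Both establish the stated formulas.
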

\begin{proof} See~\ref{subsec:Proof-gradient} \end{proof}

Using \eqref{eq:ybd} or \eqref{eq:ybdi}, a local minimization of
$\widehat{\Gamma}_T\left(\boldsymbol{B}\right)$ can be performed by using a gradient
descent algorithm (see Algorithm~\ref{alg1}) to obtain $\boldsymbol{B}^{\star}$.
The search is initialized with the $K$ first vectors of the KLT built from the empirical covariance matrix obtained by using the vectors of the dataset $\mathcal{L}$. 

\begin{algorithm}
\caption{Gradient descent search $\boldsymbol{B}^{*}$}
\label{alg1} \begin{algorithmic} 

\global\long\def\algorithmicrequire{\textbf{Input:}}%


\REQUIRE Initialize $\boldsymbol{B}$ with the $K$ first vectors of Karhunen-Lo\`eve transform

\REQUIRE Initialize $i=0$, 

\WHILE{$i<it_{\max}$ and optimality loss reduced more
than $0.01\%$}

\STATE Compute the gradient: $\boldsymbol{G}\leftarrow\nabla_{\boldsymbol{B}}\widehat{\Gamma}_T\left(\boldsymbol{B}\right)$ 

\STATE Perform line search to get $\lambda$ such that $\widehat{\Gamma}_T\left(\boldsymbol{B}-\lambda\boldsymbol{G}\right)<\widehat{\Gamma}_T\left(\boldsymbol{B}\right)$

\STATE Update: $\boldsymbol{B}=\boldsymbol{B}-\lambda\boldsymbol{G}$ 

\STATE $i\leftarrow i+1$
\ENDWHILE

\end{algorithmic} 
\end{algorithm}

\section{Nonlinear goal-oriented precoding\label{sec:nonlinear}}

The use of linear transforms for data compression is largely motivated by complexity issues. Nevertheless, linear transforms are generally not optimal. They are optimal in terms of MSE when the input signal corresponds to realizations of a (vector) Gaussian random variable \citep{mallat1999wavelet}. A natural question is thus to assess the benefits of a nonlinear transform in the presence of goal functions such as the $L_p$-norm. This is why we consider here a larger class of precoders for which the parameters can be learned. The aim of the precoder is to obtain a latent representation
$\theta\in\mathbb{R}^{K}$ of $\ell$ as follows $\theta=g\left(\ell,\Phi\right)$,
where $\Phi$ is a vector of parameters. Then, the decoder takes a
possibly quantized version of the latent representation as input to
obtain an estimate of $\widehat{\ell}$ as $\widehat{\ell}=h\left(\theta,\Psi\right)$,
where $\Psi$ is a vector of parameters.

Auto-encoders appear as a natural tool to implement a nonlinear transform when the purpose is to perform model reduction. Auto-encoders have been previously developed for image compression
\citep{balle_end--end_2017,balle_variational_2018,lee_context-adaptive_2019}
and are considered here to obtain a suited low-dimension representation of $\ell$. 



A convolutional auto-encoder consists of an encoding stage and a decoding stage
as illustrated in Figure~\ref{fig:StructureCNN}. The encoding stage
is simply a concatenation of convolution layers followed by a fully connected
layer in the end while the decoding stage implements the inverse structure of the encoding one. 
\begin{figure}[ht]
\centering\includegraphics[scale=0.5]{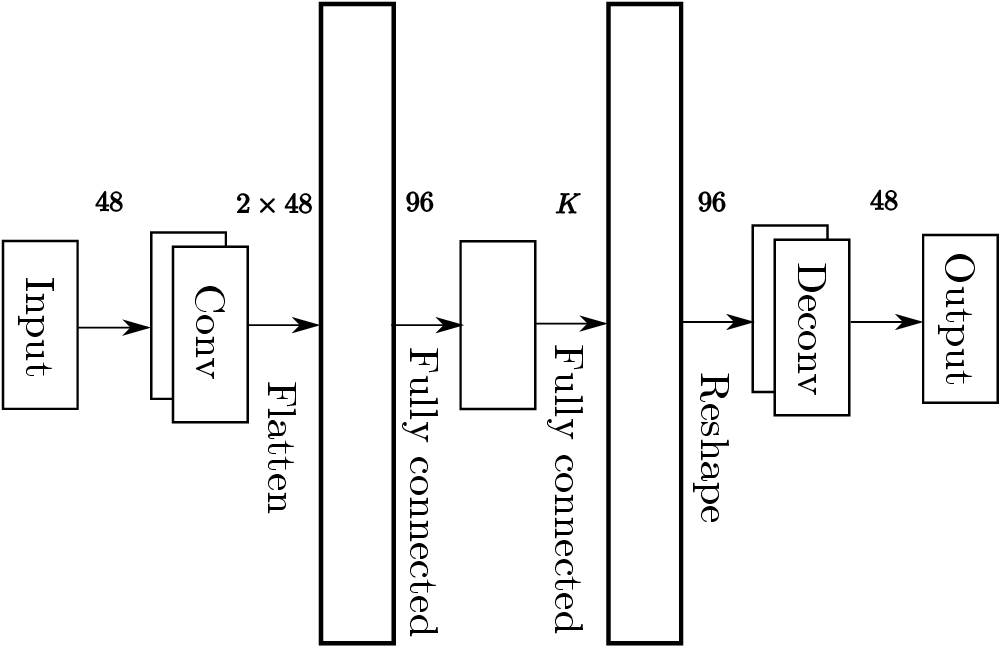}
\caption{{\color{black}Structure of the considered convolutional auto-encoder to evaluate the nonlinear goal-oriented precoding and decoding functions}}
\label{fig:StructureCNN} 
\end{figure}

The convolutional auto-encoder aims at searching a pair of $\left(\Phi^{\star},\Psi^{\star}\right)$ minimizing the expected optimality loss
\begin{equation}
\left(\Phi^{\star},\Psi^{\star}\right) \in \arg\min_{\left(\Phi,\Psi\right)}\mathbb{E}_{\ell}\left[\left|u\left(x^{\star}\left(\ell\right);\ell\right)-u(x^{\star}(h\left(g\left(\ell,\Phi\right),\Psi\right));\ell)\right|^{2}\right].
\end{equation}
As in the previous section, we also make a small notational abuse, since $\Psi$ and $\Phi$ are taken as the arguments of $\widehat{\Gamma}_T$.

Assume a set $\mathcal{L}$ of realizations of $\ell$ is available, $\left(\Phi^{\star},\Psi^{\star}\right)$ can be trained through the loss function defined as
\begin{equation}
\widehat{\Gamma}_T\left(\Phi,\Psi\right)  = \frac{1}{T}\sum_{i=1}^{T}\left\vert u(x^{\star}\left(\ell^{(i)}\right);\ell^{(i)})-u\left(x^{\star}\left(h\left(g\left(\ell^{(i)},\Phi\right),\Psi\right)\right);\ell^{(i)}\right)\right\vert ^{2}.\label{eq:Gammaphipsi}
\end{equation}

More precisely, as for the encoding stage in our simulation, one convolutional layer involving two kernels with size $5$ is considered. Zero padding is implemented to impose the output of each kernel having the same dimension as the input. The outputs of two kernels are reshaped to a vector of dimension $2N$. A fully connected layer between the $2N$ neurons and $K$ neurons is considered to get the latent representation $\theta$. The decoding part is inverse processing to reconstruct the data $\widehat{\ell}$ with dimension $N$.

\section{Goal-oriented quantization\label{sec:goq}}

In this section, we assume that the precoding and decoding functions
$g\left(\cdot\right)$ and $h\left(\cdot\right)$ are given. We focus
on the optimization of the quantizer design in order to minimize the
optimality loss for a given choice of $g\left(\cdot\right)$ and $h\left(\cdot\right)$.

\textcolor{black}{For the choice of quantization schemes, vector quantization is considered in this paper. Targeting signal reconstruction quality, using uniform scalar quantizers followed by entropy coding constitutes an overwhelmingly popular solution; it is adopted in neural compression schemes \citep{balle_end--end_2017}. Nevertheless, when the principal goal pursued is not to reconstruct the original data, it has been shown that element-wise uniform scalar quantization (namely, using a bank of scalar quantizers) could induce a significant performance degradation for the task to be executed (\citep{zhang2021decision,zou2023goal}). This can be explained by the fact that the different elements of the vector of variables to be compressed can have a markedly different influence on the final task; therefore, treating these elements equally is generally suboptimal. As a consequence, we use vector quantization in our scheme for task-oriented communication. Moreover, vector quantization allows one to ensure a fixed-rate quantized output, in contrast with auto-encoders accounting for rate constraints such as \citep{balle_end--end_2017}. At last, our choice is very well suited for the power consumption scheduling problem exploiting smart meter measurements, which have a relatively small dimension.}

A vector quantizer partitions the space $\mathbb{R}^{K}$ of the encoded
parameter $\theta=g\left(\ell\right)$ into several disjoint quantization
regions $\mathcal{C}_{1},\dots,\mathcal{C}_{M}$, \emph{i.e.}, 
\[
\bigcup_{i=1}^{M}\mathcal{C}_{i}=\mathbb{R}^{K}
\]
and $\mathcal{C}_{i}\bigcap\mathcal{C}_{j}=\varnothing$ for all $i\neq j$.
The quantization function is such that 
\begin{equation}
q\left(\theta\right)=r_{i}\iff\theta\in\mathcal{C}_{i},
\end{equation}
where $\mathcal{R}=\left\{ r_{1},\dots,r_{M}\right\} $ is the set
of representatives associated with the quantization regions $\mathcal{C}_{1},\dots,\mathcal{C}_{M}$
respectively. 
Contrary to conventional quantizers, which usually minimize the mean-square
reconstruction error, in what follows, we search for a pair $\left(\mathcal{R}^{\star},\mathcal{C}^{\star}\right)$
that minimizes the expected optimality loss 
\begin{align}
 & \mathbb{E}_{\ell}\left[\left|u\left(x^{\star}\left(\ell\right);\ell\right)-u\left(x^{\star}\left(\widehat{\ell}\right);\ell\right)\right|^{2}\right]=\nonumber \\
 & \sum_{m=1}^{M}\int_{\ell\in\mathcal{L}_{m}}\hspace{-2mm}\left|u\left(x^{\star}\left(\ell\right);\ell\right)-u\left(x^{\star}\left(h\left(r_{m}\right)\right);\ell\right)\right|^{2}\phi\left(\ell\right)\mathrm{d}\ell,\label{eq:GOQ}
\end{align}
where $\phi\left(\ell\right)$ is the probability density function
of $\ell$ and 
\[
\mathcal{L}_{m}\triangleq\{\ell\in\mathbb{R}^{N}|g\left(\ell\right)\in\mathcal{C}_{m}\},1\leqslant m\leqslant M.
\]
Finding jointly $\left(\mathcal{R}^{\star},\mathcal{C}^{\star}\right)$
is not trivial. Moreover, the evaluation of \eqref{eq:GOQ} considering
$\phi\left(\ell\right)$ is complex, even if $\phi\left(\ell\right)$
is perfectly known. Consequently, a practical algorithm is proposed
which is similar to the decisional quantizer proposed in \citep{zou2018decision}.
The main steps of this algorithm are detailed in what follows. As
in Section~\ref{sec:lineartransformation}, a set $\mathcal{L}=\{\ell^{\left(1\right)},\ell^{\left(2\right)},\dots,\ell^{\left(T\right)}\}$
of realizations of the parameter $\ell$ is used to approximate \eqref{eq:GOQ}.

Assume that at iteration $j$, a set of representatives $\mathcal{R}^{\left(j\right)}=\left\{ r_{1}^{\left(j\right)},\dots,r_{M}^{\left(j\right)}\right\} $
is available.
\begin{enumerate}
\item From $\mathcal{R}^{\left(j\right)}$, partition the set $\mathcal{L}$
as 
\begin{equation}
\mathcal{L}_{m}^{\left(j\right)}=\left\{ \ell\in\mathcal{L}\mid\mathcal{E}\left(r_{m}^{\left(j\right)};\ell\right)=\min_{i}\mathcal{E}\left(r_{i}^{\left(j\right)};\ell\right)\right\}, 1\leqslant m\leqslant M,\label{eq:Ljm}
\end{equation}
where
\begin{equation}
\mathcal{E}\left(r_{m};\ell\right)=\left|u\left(x^{\star}\left(\ell\right);\ell\right)-u\left(x^{\star}\left(h\left(r_{m}\right)\right);\ell\right)\right|^{2}.
\end{equation}
\item Update the set of representatives $\mathcal{R}^{\left(j+1\right)}=\left\{ r_{1}^{\left(j+1\right)},\dots,r_{M}^{\left(j+1\right)}\right\}$
as
\begin{equation}
r_{m}^{\left(j+1\right)}\in\arg\min_{r}\sum_{\ell\in\mathcal{L}_{m}^{\left(j\right)}}\mathcal{E}\left(r;\ell\right), 1\leqslant m\leqslant M.\label{eq:dstar}
\end{equation}
For any $\ell\in\mathbb{R}_{+}^{K}$, the goal-oriented quantizer
output is then obtained as
\[
q^{\left(j+1\right)}\left(g\left(\ell\right)\right)=r_{m}^{\left(j+1\right)}
\]
with
\[
m=\arg\min_{i=1,\dots,M}\mathcal{E}\left(r_{i}^{\left(j+1\right)};\ell\right).
\]
\item Evaluate the resulting estimate of the expected optimality loss as
\begin{equation}
\Gamma^{\left(j+1\right)}= \sum_{m=1}^{M}\sum_{\ell\in\mathcal{L}_{m}^{\left(k\right)}}\left|u\left(x^{\star}\left(\ell\right);\ell\right)
 -u\left(x^{\star}\left(h\left(q^{\left(j+1\right)}\left(g\left(\ell\right)\right)\right)\right);\ell\right)\right|^{2}.\label{eq:GOQ-1}
\end{equation}
\end{enumerate}
Like Linde-Buzo-Gray (LBG) algorithm \citep{algorithm1980linde},
Algorithm~\ref{alg2} performs this iterative evaluation to determine
a locally optimal goal-oriented quantizer. The set of representatives
$\mathcal{R}^{(0)}$ is initialized considering, $e.g.$, randomly
chosen elements of the set $g\left(\mathcal{L}\right)=\left\{ g\left(\ell^{\left(1\right)}\right),g\left(\ell^{\left(2\right)}\right),\dots,g\left(\ell^{\left(T\right)}\right)\right\} $.
\begin{algorithm}
\caption{Goal-oriented quantizer design algorithm}
\label{alg2} \begin{algorithmic} 

\REQUIRE Utility function $u\left(x;\ell\right)$ 

\global\long\def\algorithmicrequire{\textbf{Output:}}%
 \REQUIRE $\mathcal{R}^{\star}$ 

\global\long\def\algorithmicrequire{\textbf{Initialization:}}%
\REQUIRE $\mathcal{R}^{(0)}=\{r_{1}^{(0)},\dots,r_{M}^{(0)}\}$,
$j=1$, $\Gamma^{\left(0\right)}=\infty$

\WHILE{$j\leqslant j_{\max}$ and optimality loss uced
more than $0.01\%$} 

\STATE Evaluate $\mathcal{L}_{m}^{\left(j\right)},1\leqslant m\leqslant M,$
from $\mathcal{R}^{(j-1)}$ using \eqref{eq:Ljm}

\STATE Evaluate $r_{m}^{\left(j+1\right)},$$m=1,\dots,M$, using
\eqref{eq:dstar}

\STATE Evaluate $\Gamma^{\left(j+1\right)}$ using \eqref{eq:GOQ-1}

\STATE $j=j+1$

\ENDWHILE

\end{algorithmic} 
\end{algorithm}
As LBG algorithm, Algorithm~\ref{alg2} may only converge to a local
minimum of the expected optimality loss. The convergence depends on
the initialization of $\mathcal{R}^{(0)}$.

\section{Iterative optimization of linear transform and goal-oriented quantization}
\label{sec:iteratively-improve} 

Using Algorithm~\ref{alg1} and
Algorithm~\ref{alg2}, one obtains a pair of separately designed
precoder/decoder pair and quantizer. In Section~\ref{sec:lineartransformation},
Algorithm~\ref{alg1} neglects the impact of the quantization noise
in the design of the linear precoder/decoder. This section presents
an iterative design, accounting for the quantization noise in Algorithm~\ref{alg1},
to further uce the optimality loss.

The effect of quantization in Section~\ref{sec:goq} is to add some
noise $\eta$ to the output of the precoder 
\begin{align*}
\widehat{\theta} & =q\left(\theta\right)\\
 & =\theta+\eta.
\end{align*}
We assume that $\eta$ is Gaussian with mean $\mu_{\eta}\in\mathbb{R}^{K}$
and covariance matrix $\Sigma_{\eta}$. Considering the linear precoder/decoder
pair, the reconstructed signal becomes 
\begin{equation}
\begin{aligned}\widehat{\ell}= & \boldsymbol{B}^{\mathrm{T}}\widehat{\theta}\\
= & \boldsymbol{B}^{\mathrm{T}}\left(\theta+\eta\right)\\
= & \widetilde{\ell}+\boldsymbol{B}^{\mathrm{T}}\eta
\end{aligned}
\label{eq:quantizationmodel}
\end{equation}
where $\widetilde{\ell}=\boldsymbol{B}^{\mathrm{T}}\boldsymbol{B}\ell=\boldsymbol{B}^{\mathrm{T}}\theta$
is reconstructed from $\theta$ in absence of quantization.

Consider the $j$-th iteration of an iterative optimization algorithm
of the precoder/decoder pair and of the quantizer. Assume that a precoding
matrix $\boldsymbol{B}^{\left(j\right)}$, and a set of representatives
$\mathcal{R}^{\left(j\right)}$ have been obtained from Algorithms~\ref{alg1}
and~\ref{alg2}. The characteristics of the quantization noise can
be evaluated using the set $\mathcal{L}$ of realizations of $\ell$
as 
\begin{align*}
\widehat{\mu}^{\left(j\right)} & =\frac{1}{T}\sum_{i=1}^{T}\eta^{\left(i\right)}\\
\widehat{\Sigma}^{\left(j\right)} & =\frac{1}{T}\sum_{i=1}^{T}\left(\eta^{\left(i\right)}-\widehat{\mu}_{\eta}\right)\left(\eta^{\left(i\right)}-\widehat{\mu}_{\eta}\right)^{\mathrm{T}}
\end{align*}
\textcolor{black}{where 
\begin{equation}
\begin{aligned}\eta^{\left(i\right)}= & q\left(\boldsymbol{B}^{\left(j\right)}\ell^{\left(i\right)}\right)-\boldsymbol{B}^{\left(j\right)}\ell^{\left(i\right)}\end{aligned}
.
\end{equation}
}

\textcolor{black}{At iteration $j+1$, the impact of the quantization
noise can be introduced by introducing for each $\ell^{\left(i\right)}\in\mathcal{L}$,
$\overline{\kappa}$ noisy reconstructions of $\ell^{\left(i\right)}$
\begin{equation}
\widehat{\ell}^{\left(i,\kappa\right)}=\left(\boldsymbol{B}^{\left(j\right)}\right)^{\mathrm{T}}\boldsymbol{B}^{\left(j\right)}\ell^{\left(i\right)}+\left(\boldsymbol{B}^{\left(j\right)}\right)^{\mathrm{T}}\eta^{\left(i,\kappa\right)}\label{eq:NoisyFakeReconstruction}
\end{equation}
where $\eta^{\left(i,\kappa\right)}$, $1\leqslant\kappa\leqslant\overline{\kappa}$
are realizations of independent and identically distributed Gaussian
random vectors with mean $\widehat{\mu}^{\left(j\right)}$ and covariance
matrix $\widehat{\Sigma}^{\left(j\right)}$. The noisy reconstructions
$\widehat{\ell}^{\left(i,\kappa\right)}$ are then used in the evaluation
of the gradient }\eqref{eq:ybd} or \eqref{eq:ybdi} to obtain a precoder/decoder
optimization accounting for the quantization noise. As a result, a
matrix $\boldsymbol{B}^{\left(j+1\right)}$ is obtained. Algorithm~\ref{alg2}
is then applied to get $\mathcal{R}^{\left(j+1\right)}$. Then $\widehat{\mu}^{\left(j+1\right)}$
and $\widehat{\Sigma}^{\left(j+1\right)}$ can be evaluated. 

The iterative optimization process is summarized in Algorithm~\ref{algo:algorithm3}. \textcolor{black}{The computational complexity of linear precoding part is $O\left(TN^{2}K\right)$,  the complexity of the goal-oriented quantization part is $O\left(N_{rep}(TNK+C_r)\right)$, where $N_{rep}$ is the overall number of representatives and $C_r$ represents the complexity of computing the representatives of the $L_p$ norm optimization problem.}
\begin{algorithm}
\caption{Iterative optimization of the linear transformation and quantization}

\begin{algorithmic} 
\global\long\def\algorithmicrequire{\textbf{Input:}}%
\REQUIRE Initial matrix $\boldsymbol{B}^{\left(0\right)}$ (KL basis),
utility function $u\left(x;\ell\right)$, initial $\mathcal{R}^{\left(0\right)}$,
$\mu^{\left(0\right)}$, and $\Sigma^{\left(0\right)}$;\\

\global\long\def\algorithmicrequire{\textbf{Output:}}%
\REQUIRE $\boldsymbol{B}^{\star}$ and $\mathcal{R}^{\star}$;

\global\long\def\algorithmicrequire{\textbf{Initialization:}}%
\REQUIRE $j=1$;

\WHILE{$j\leqslant$$j_{\text{max}}$ and optimality loss reduced
more than $0.01\%$}

\STATE Evaluate $\boldsymbol{B}^{\left(j\right)}$ using Algorithm~\ref{alg1}
and \eqref{eq:NoisyFakeReconstruction}

\STATE Evaluate $\mathcal{R}^{\left(j\right)}$ using Algorithm \ref{alg2}

\STATE Evaluate $\mu^{\left(j\right)}$ and $\Sigma^{\left(j\right)}$

\STATE Evaluate $\Gamma^{\left(j\right)}$ using \eqref{eq:GOQ-1}

\STATE $j=j+1$

\ENDWHILE

\end{algorithmic} \label{algo:algorithm3} 
\end{algorithm}

\section{Numerical performance analysis}
\label{sec:numerical}

In this section, we conduct a comprehensive numerical analysis to provide insights into the benefits of goal-oriented compression. 
Our focus is on the power consumption scheduling problem in which the decision maker aims to find a controllable consumption vector $x$ minimizing the $L_p$ norm given a (perfect) day-ahead forecast of the non-controllable vector $\ell$ (the case of considering the forecast noise can be treated as an extension of this work). 
Due to the communication resource limitation, the decision maker has the sole knowledge of an approximated version of $\ell$, namely, $\widehat{\ell}$. Our goal is to mitigate the performance degradation induced by the deviation between $\widehat{\ell}$ and ${\ell}$. To assess the influence of the compression loss on the goal, we define the relative squared optimality loss (RSOL) of the compression scheme with respect to the ideal case as follows:
\begin{equation}
\rho_C(\%)=100\frac{\sum_{i=1}^{T}\left(U^{\left(i\right)}_{\mathrm{perfect}}-U^{\left(i\right)}_{C}\right)^{2}}{\sum_{i=1}^{T}\left(U^{\left(i\right)}_{\mathrm{perfect}}\right)^{2}},
\label{eq:ROLdef}
\end{equation}
 where $U^{\left(i\right)}$
 is obtained from the realization $\ell^{\left(i\right)}$, $C$ represents the compression scheme, and the performance of the ideal case is obtained by assuming that the controllable consumption $\ell$ is perfectly known by the receiver (that is, the decision-maker).

To evaluate our approach in practical scenarios, consumption profiles from the Ausgrid \citep{ausgrid} database are used for all the considered schemes. {\color{black} The data from Ausgrid consist of daily energy consumptions sampled every half hour for one year (from $01-07-2012$ to $30-06-2013$) for 300 users, thus the dimension of the data is $48\times(365\times300)$, that is $N=48$. Data have been randomly shuffled. Then $80~\%$ of the data have been used to train the methods and the remaining $20~\%$ for evaluation}. The energy need in terms of controllable consumption is set to $E=50$ kWh. As the system consists of several stages including precoding, quantization and decoding, we present the simulation results from different aspects for a comprehensive illustration.

\subsection{Precoding}
We first evaluate the benefit of using goal-oriented precoding schemes in the scenario without quantization noise. 
{\color{black} A KLT-based precoding scheme and an auto-encoder (AE) serve as references. The AE scheme is an adaptation to the compression for 1D vectors of the Tensorflow learned data compression network \citep{TensorflowCompTut}, itself based on the \citep{balle_end--end_2017}. The architecture consists of three parts: an encoder that transforms the input vector into a lower-dimensional latent variable;  a decoder that reconstructs the original source vector from this latent representation; a prior and entropy model between the encoder and decoder to model the marginal distribution of the latent variable and efficiently encode it to minimize the average code length. In this first part, the AE is trained to minimize the reconstruction mean-square error (MSE) as commonly used.}
When $K=1$, Figure~\ref{fig:comparewrtp} clearly demonstrates the improvement provided by the proposed goal-oriented transformation schemes over the conventional KLT and the AE. 
The nonlinear transformation scheme obtained from our neural network architecture outperforms the solution given by Algo.~1, since the activation function brings nonlinearity to the scheme. Moreover, one can observe that the RSOL increases with $p$. 
This can be explained by the fact that the denominator of RSOL defined by (\ref{eq:ROLdef}) usually decreases when $p$ increases. For the remaining simulations, we fix $p=\infty$, corresponding to the peak power minimization problem. 
\begin{figure}
	\centering
 \includegraphics[scale=0.7]{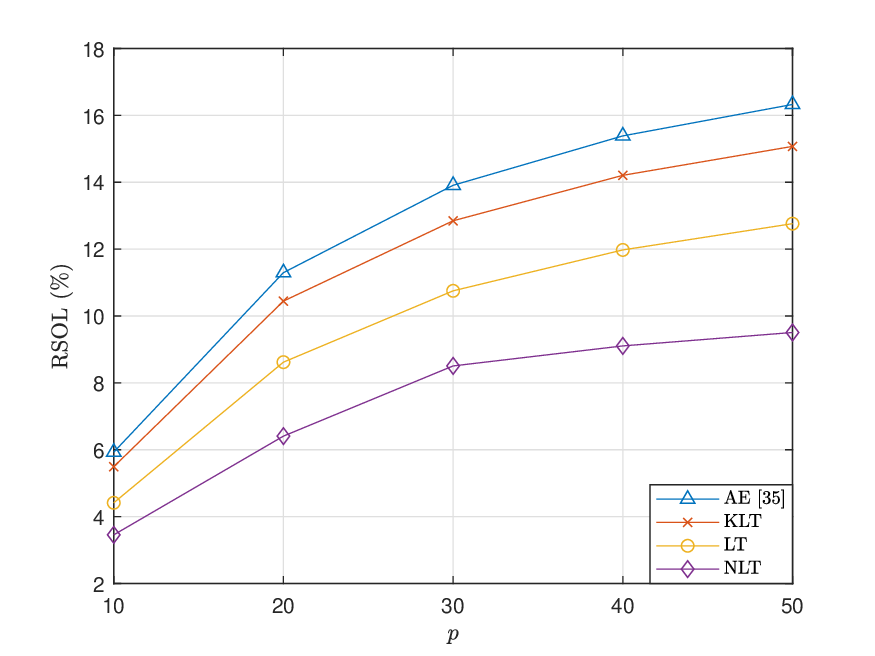}
 
    \caption{\textcolor{black}{Relative squared optimality loss ($\%$) v.s. the exponent of $L_{p}$-norm ($p$) when $K=1$. Both goal-oriented linear transform (LT) and goal-oriented non-linear transform (NLT) have a lower RSOL than the Karhuenen-Loeve transform (KLT) and the auto-encoder (AE transform). Among all schemes, the proposed NLT solution provides the best result. The influence of $p$ on the RSOL value shows also the influence of the goal function on the benefits of the goal-oriented approach.}}
	\label{fig:comparewrtp}
\end{figure}

{\color{black} To evaluate the influence of dimensionality reduction on the goal-oriented transformation schemes, we plot the RSOL w.r.t. the value of $K$.}
For the KLT, the orthogonal basis is obtained as eigenvectors of the covariance matrix, and the basis vectors are sorted from the most to the least important in terms of minimizing reconstruction error. \textcolor{black}{One can observe that the proposed NLT scheme outperforms other techniques. Moreover, while the AE structure leverages the nonlinear capabilities of neural networks, its lack of task-specific focus results in greater performance degradation when using a smaller $K$. This loss in performance can be substantially mitigated by employing a latent space with larger dimensions.}
\begin{figure}[H]
	\centering
 \includegraphics[scale=0.7]{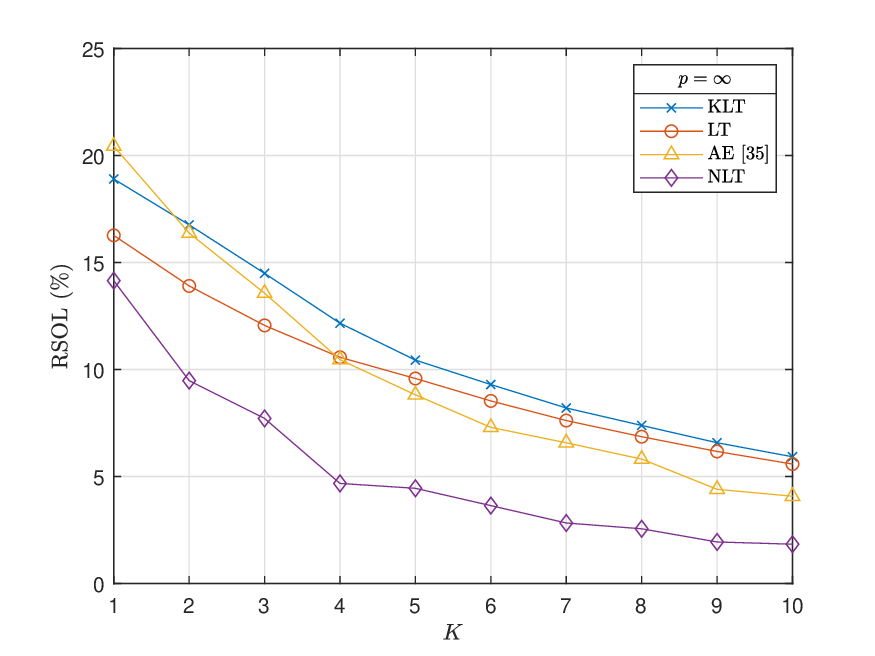}
    \caption{\textcolor{black}{Relative squared optimality loss ($\%$) v.s. the dimension of the precoding output signal space / latent space ($K$).}}
	\label{fig:comparewrtK}
\end{figure}

Additionally, to interpret the obtained gain in a more intuitive manner, we choose one representative realization of $\ell$ from Ausgrid, and display the series of $\ell$, and $x^{\star}(\widehat{\ell})+\ell$ with different transformation methods in Figure~\ref{fig:sample}. By  approximating $\ell$ with a single basis vector, one can observe that the shape of $\widehat{\ell}$ in goal-oriented transformations is more similar to $\ell$ compared with the KLT. As a result, there are fewer fluctuations in $x^{\star}(\widehat{\ell})+\ell$, making it easier to minimize the $L_p$ norm using the proposed LT and NLT schemes.  
\begin{figure}
    \centering

    \begin{subfigure}[b]{0.49\textwidth}
        \centering\includegraphics[scale=0.49]{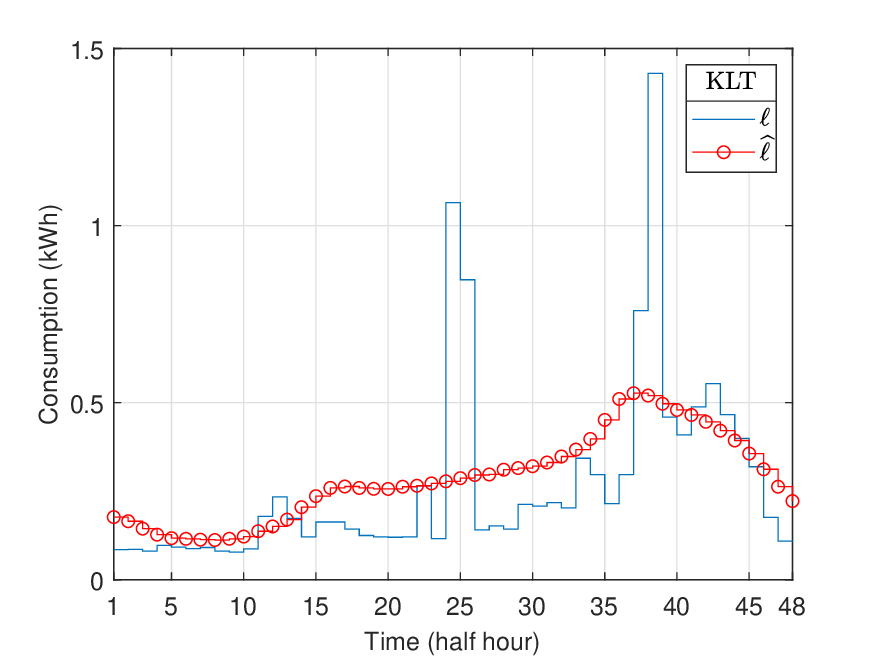}
    \end{subfigure}
    \begin{subfigure}[b]{0.49\textwidth}
        \centering\includegraphics[scale=0.49]{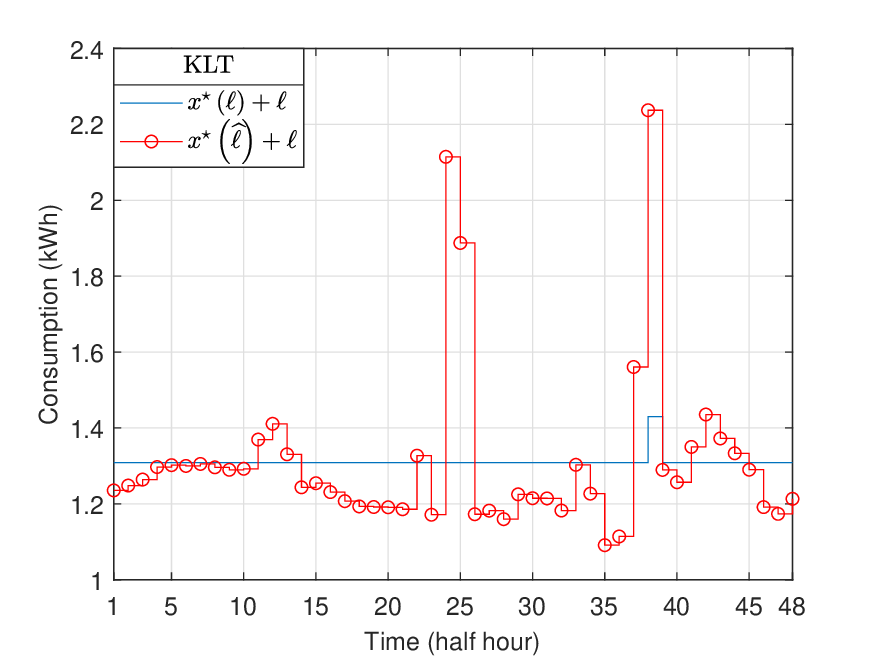}
    \end{subfigure}

    \begin{subfigure}[b]{0.49\textwidth}
        \centering\includegraphics[scale=0.49]{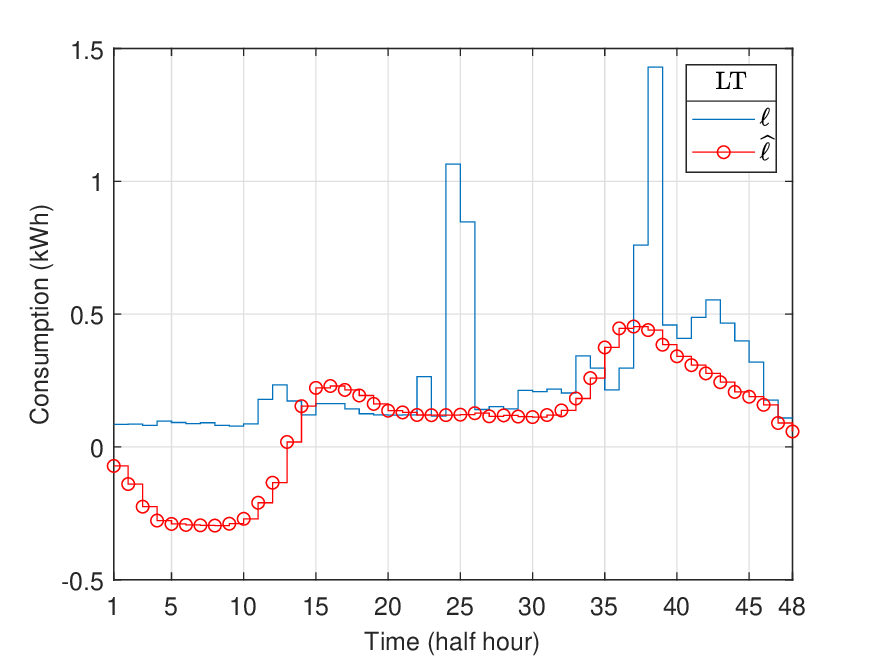}
    \end{subfigure}
    \begin{subfigure}[b]{0.49\textwidth}
        \centering\includegraphics[scale=0.49]{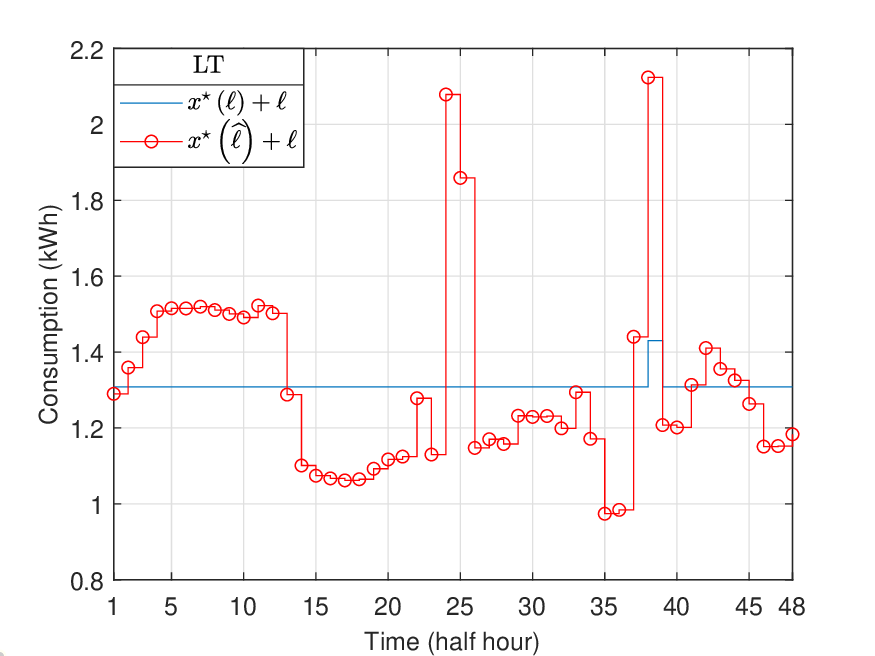}
    \end{subfigure}
    
    \begin{subfigure}[b]{0.49\textwidth}
        \centering\includegraphics[scale=0.49]{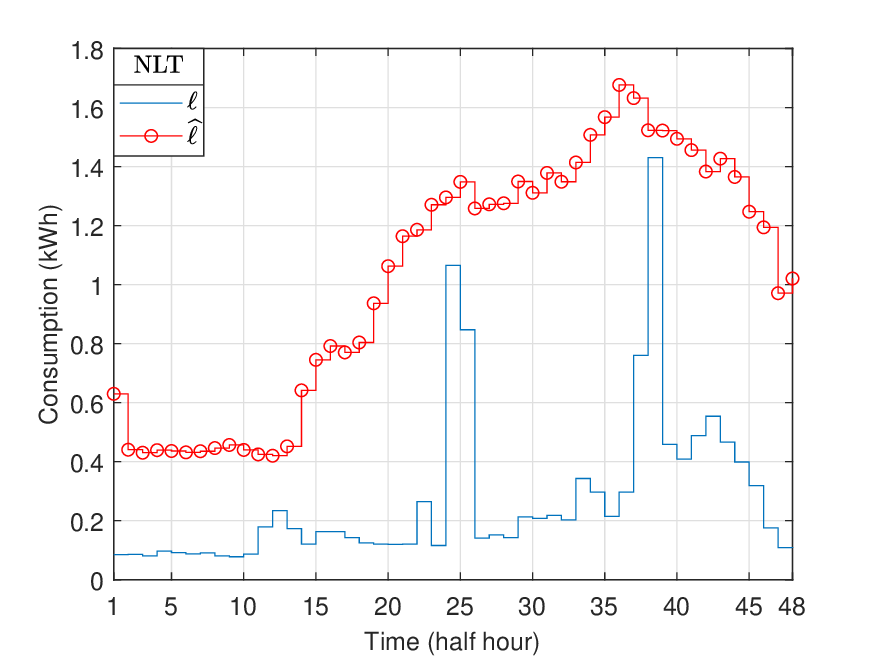}
    \end{subfigure}
    \begin{subfigure}[b]{0.49\textwidth}
        \centering\includegraphics[scale=0.49]{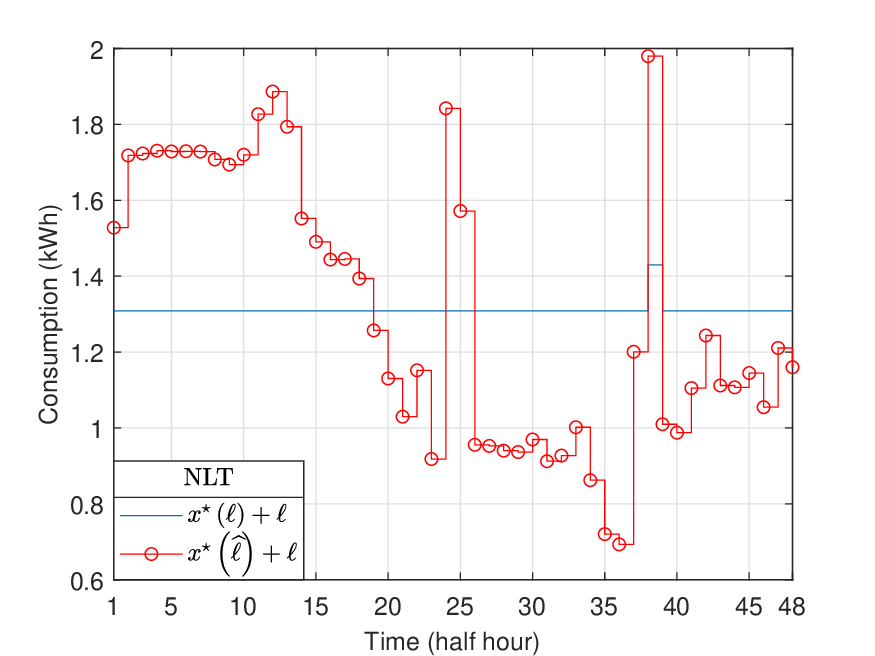}
    \end{subfigure}
    
    \centering
    \caption{Daily energy consumption $\ell$ of one user and its approximations with the corresponding utility entries obtained by three different methods (when $K=1$). NLT has the best performance of minimization of peak value.}
    \label{fig:sample}
\end{figure}

\subsection{Quantization}
After assessing the impact of transformation, we further study the goal-oriented quantization with given transformation schemes. We use the above-mentioned two transformation schemes, followed by a quantizer, and compare the performance with different quantization techniques. When the precoding process reduces the dimension of $\ell$ to $K=1$, we compare three quantization techniques, namely, uniform quantization, LBG and GOQ. Figure~\ref{fig:comparesamplebitnew} represents the RSOL against the number of quantization bits. 
When implementing LT and uniform quantization, we use the hardware-limited task-based quantization (HLTB) algorithm proposed by \citep{shlezinger2019hardware}. Figure~\ref{fig:comparesamplebitnew} shows the GOQ exploits the quantization resources more efficiently such that the optimality loss can be reduced significantly, especially with non-linear transformation. 

Another important issue to discuss here is the choice of the reduced dimension $K$ for {\color{black} a given total bit budget}. Intuitively, when the number of bits tends to infinity, it is not necessary to reduce the dimension in the precoding stage since each element of the transformed vector can be quantized with negligible errors. 
Conversely,  when the total bit budget is very small, a significant reduction in dimensionality is expected, as these bits need to transmit the most important features. Based on these observations, we explore the tradeoff between information loss induced by dimensionality reduction and information loss induced by quantization with finite bit budget. To determine the optimal dimension $K$ that minimizes the optimality loss, we implement the proposed linear transformation scheme and the GOQ method. 
Figure~\ref{fig:cnnautoquantized} illustrates that the RSOL generally first decreases and then increases as $K$ becomes larger, indicating the existence of an optimal dimension for the quantization input. 
\textcolor{black}{This can be explained by the fact that we are operating at a fixed budget on the total number of quantization bits. Therefore, when $K$ increases, the number of bits per dimension decreases, hence the observed phenomenon. This analysis highlights the importance of selecting an appropriate $K$ for dimensionality reduction in the context of quantization, balancing the reduction in information loss due to dimensionality reduction with the information loss introduced by quantization. 
In addition, we address the rate-relative square optimality loss tradeoff in Fig.~\ref{fig:LTGOVQvsBalle} for different values of $K$. For the AE, the compromise is tuned by properly weighting the MSE and the rate in the loss function. The network structure has to be retrained for each $K$ and each value of the weights. In contract with the AE structure, our method can achieve a given level of RSOL with much less transmission bits.}

\begin{figure} \centering\includegraphics[scale=0.7]{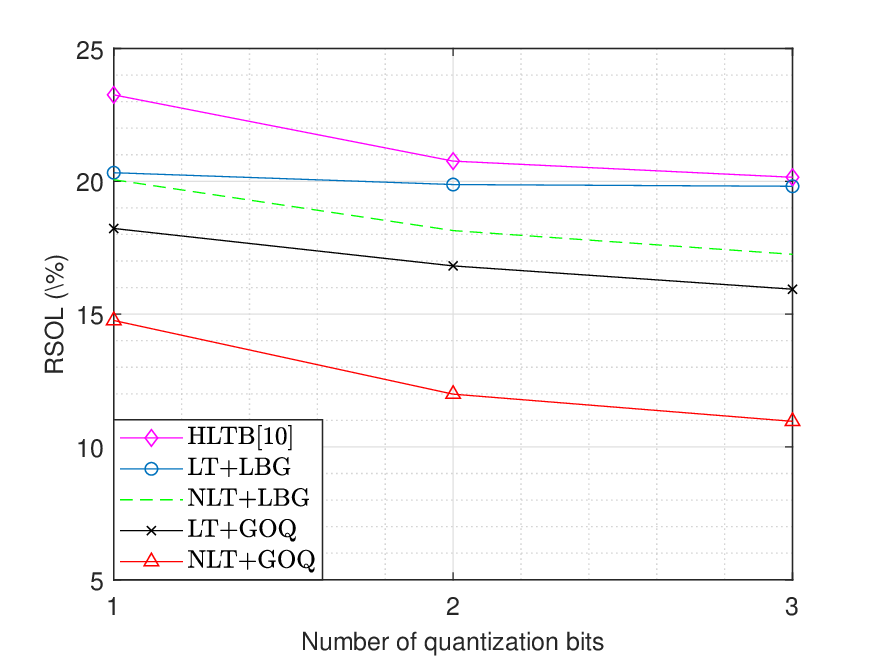}
    \caption{Evolution of the relative optimality loss ($\%$) with quantization bit constraint ($N=48$). RSOL is reduced with larger budget of quantization bits. Among these methods, NLT followed by a GOQ has the best performance.}
	\label{fig:comparesamplebitnew}
\end{figure}

\begin{figure}[H]
\centering
\includegraphics[scale=0.7]{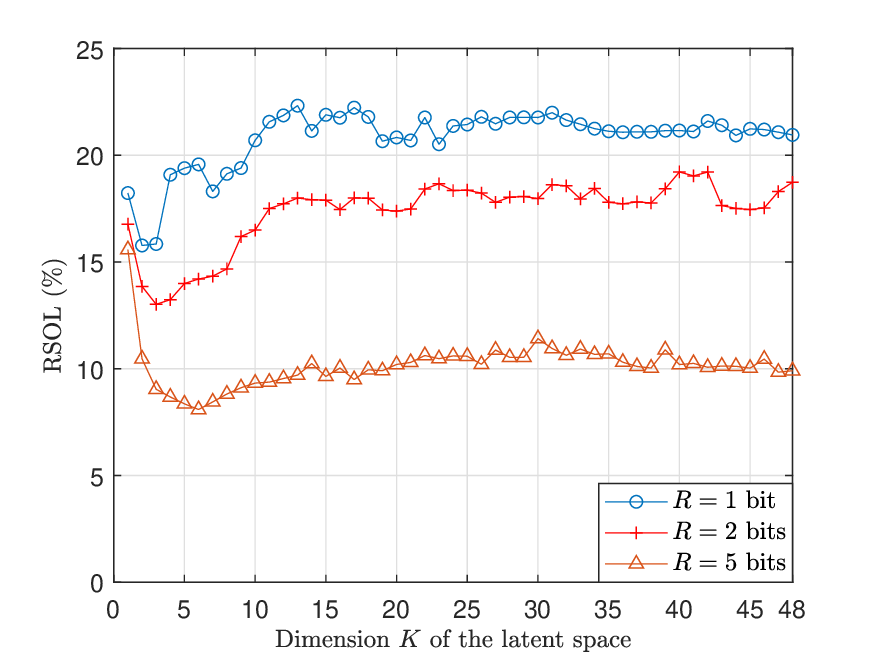}
\caption{Relative optimality loss v.s. the dimension of encoded space ($K$) with a fixed quantization bit constraint for linear transformation and goal-oriented quantization}
\label{fig:cnnautoquantized}
\end{figure}

\begin{figure}[H]
\centering
\includegraphics[scale=0.7]{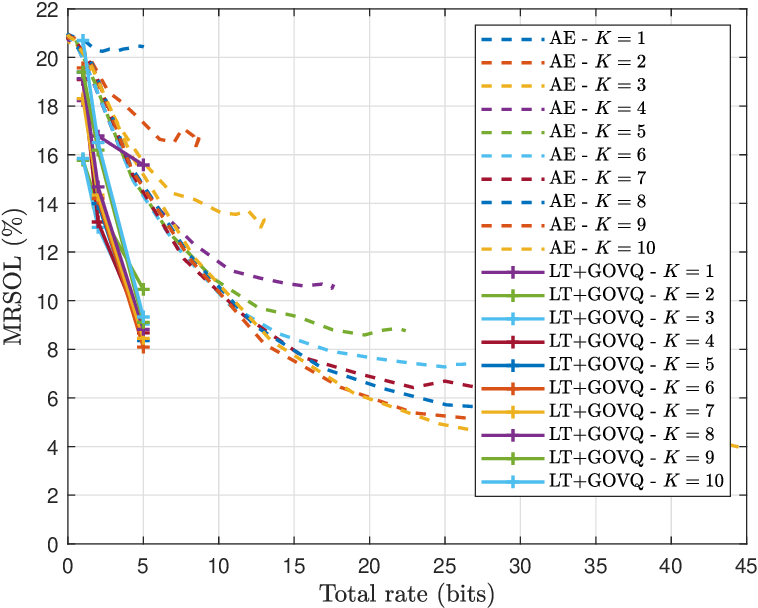}
 \caption{\textcolor{black}{Relative squared optimality loss v.s. the total rate as a function for different latent space dimensions ($K$) for the linear transform followed by the goal-oriented vector quantizer and for the auto-encoder structure adapted from \citep{TensorflowCompTut}.}}
\label{fig:LTGOVQvsBalle}
\end{figure}

\subsection{Iterative algorithm}
At last, we assess the performance of the proposed iterative algorithm by optimizing the transformation scheme and the quantization rule in an alternative way. By taking into account the interplay between quantization noise and the transformation scheme, Figure~\ref{fig:GOPGOQpinf} shows that the iterative algorithm outperforms the aforementioned method using LT and GOQ once.  Although the iterative algorithm provides only a marginal improvement beyond the initial optimization, it demonstrates the potential for further enhancement through careful refinement of the transformation scheme and the quantization rule. \textcolor{black}{Regarding the computation cost,  while iterative optimization has its merits in situations where resources are abundant and  performance maximization is required, a single iteration suffices in resource-limited scenarios, providing a pragmatic balance between performance and computational efficiency. }

\begin{figure}[H]
\centering\includegraphics[scale=0.7]{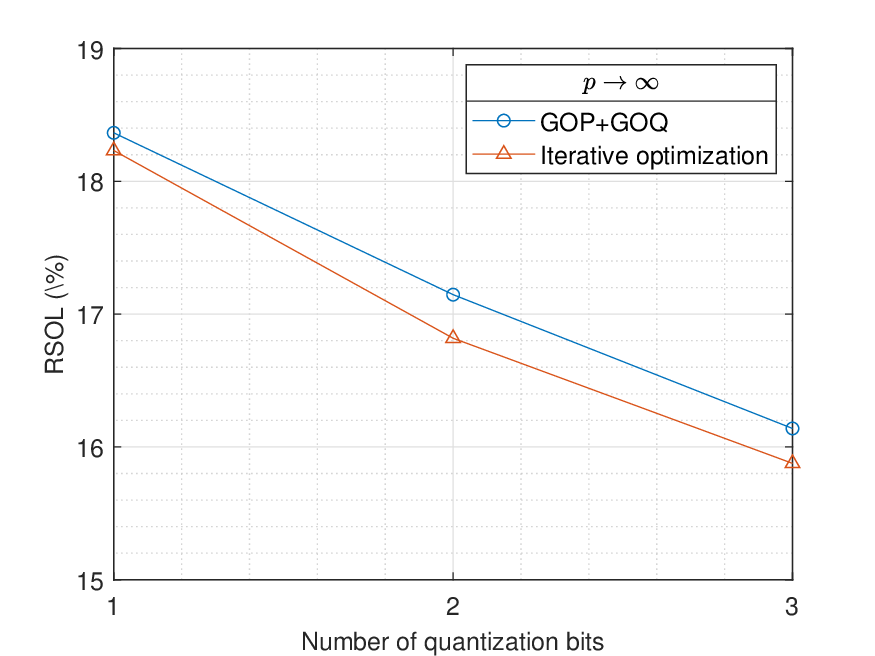}
\caption{Comparing the result of the iterative optimization algorithm to the result without optimizing ($K=1,\ p=\infty$)}
\label{fig:GOPGOQpinf}
\end{figure}
\section{Conclusion}
\label{sec:conclusion}

In this paper, the task of the receiver is modeled by an optimization problem. The goal of the receiver is to minimize an $L_p$-norm performance metric. For taking its decision, the receiver is assumed to have only access to a compressed version of the parameters of the function (minus an $L_p$-norm) to be maximized. The problem of designing a goal-oriented precoder which is followed by a goal-oriented quantizer is addressed. By adopting a non-joint design approach for these two stages and making appropriate approximations, the problem becomes tractable. \textcolor{black}{We provide both an interpretable linear transform which exploits the knowledge of the utility function $u$  and optimal decision, and another non-linear transform based on neural networks.} Compared to the KLT, the proposed linear transformation yields significant performance gains in terms of relative optimality loss. It is also seen to what extent the CNN-based nonlinear transformation performs better than the proposed linear transformation. By moving from the KLT to the CNN-based goal-oriented precoding, the relative optimality loss has been to drop from $20\%$ to values as small as $1-2\%$, which shows all merits of matching the coding scheme to the goal. It is also seen under which conditions, the benefit from using a goal-oriented quantization can accumulate with those from the goal-oriented precoding stage. All these positive results show the interest of adapting, possibly on the fly, the coding scheme to the task of the receiver. The proposed approach might be extended to other settings of practical interest. For instance, when the goal function is not known but only realizations of it are available. This would naturally lead to a reinforcement learning perspective for the design of goal-oriented encoders. Also, in this paper, the communication channel between the encoder and decoder is assumed to be perfect, which leaves space for improvements in the presence of communication noise \textcolor{black}{as considered in \citep{dai2022communication, zhang2023highly}}.

\appendix
\section{Proofs}

\subsection{Proof of Proposition \ref{prop:opdecision}\label{subsec:Proof-opdecison}}

\textcolor{black}{We want to prove that the solution $x^{\star}$ which maximizes (1) under the constraints (2) and (3) is a water-filling solution. For this, we first notice that (1) is a concave function and then apply KKT conditions, which are necessary and sufficient optimality conditions since (2) and (3) are affine constraints.}

\textcolor{black}{Consider a given vector $\ell$ and assume without loss
of generality that} 
\begin{equation}
\ell_{1}\leqslant\ell_{2}\leqslant\dots\leqslant\ell_{j}\leqslant\ell_{j+1}\leqslant\dots\leqslant\ell_{N}.\label{eq:assumption}
\end{equation}
\textcolor{black}{The considered optimization problem \eqref{eq:xstar} is a
convex problem since the Lp-norm is convex and the constraint functions are affine.} Introducing the Lagrangian
\begin{equation}
\mathcal{L}=-\sum_{k=1}^{N}\left(x_{k}+\ell_{k}\right)^{p}-\delta\left(E-\sum_{k=1}^{N}x_{k}\right)-\sum_{k=1}^{N}\lambda_{k}x_{k}
\end{equation}
and \textcolor{black}{applying} KKT conditions, one obtains, as $p\geqslant1$
\begin{equation}
\frac{\partial\mathcal{L}}{\partial x_{k}}=-p\left(x_{k}+\ell_{k}\right)^{p-1}+\delta-\lambda_{k},
\end{equation}
and
\begin{equation}
\begin{cases}
-p\left(x_{k}^{\star}+\ell_{k}\right)^{p-1}+\delta^{\star}-\lambda_{k}^{\star} & =0\\
E-\sum_{i=1}^{N}x_{i}^{\star} & =0\\
-x_{k}^{\star} & \leqslant0\\
\lambda_{k}^{\star}x_{k}^{\star} & =0.
\end{cases}
\end{equation}

Getting rid of the slack variables $\left\{ \lambda_{k}^{\star}\right\} $,
one obtains
\begin{equation}
p\left(x_{k}^{\star}+\ell_{k}\right)^{p-1}=\delta^{\star}
\end{equation}
\begin{equation}
x_{k}^{\star}=\left(\left(\frac{\delta^{\star}}{p}\right)^\frac{1}{p-1}-\ell_{k}\right)^{+}
\end{equation}
where $\left(a\right)^{+}=\max(a,0)$. This is \textcolor{black}{therefore} a water-filling solution
with water level $\mu=\left(\frac{\delta^{\star}}{p}\right)^\frac{1}{p-1}$,
thus
\begin{equation}
x_{j}^{\star}=\left(\mu-\ell_{j}\right)^{+}.\label{eq:waterload_bis}
\end{equation}
Let $n^{\star}$ be the number of non-zero entries of $x^{\star}$.
According to \eqref{eq:assumption}, one has $x_{j}^{\star}>0$ for
$j=1,\dots,n^{\star}$ and $x_{j}^{\star}=0$ for $j>n^{\star}$.
Consequently, as
\begin{equation}
\sum_{j=1}^{n^{*}}x_{j}^{\star}=E,
\end{equation}
one obtains
\begin{equation}
\sum_{j=1}^{n^{*}}\left(\mu-\ell_{j}\right)=E,
\end{equation}
leading to
\begin{equation}
\mu=\frac{1}{n^{\star}}\left(E+\sum_{j=1}^{n^{\star}}\ell_{j}\right).\label{eq:waterlevel}
\end{equation}

The value of $n^{\star}$ is then obtained as the largest value of
$n$ such that $\mu-\ell_{n}>0$
\begin{align*}
n^{\star} & =\arg\max_{n}n\\
 & \text{s.t. }\frac{1}{n}\left(E+\sum_{j=1}^{n}\ell_{j}\right)-\ell_{n}>0.
\end{align*}

\subsection{Proof of Proposition \ref{prop:linearapp}\label{subsec:Proof-linearapp}}

As seen in \ref{subsec:Proof-opdecison}, $x^{\star}\left(\ell\right)=\left[x_{1}^{\star},x_{2}^{\star},\dots,x_{N}^{\star}\right]^{\mathrm{T}}$
is the function of $\ell$. {\color{black}As $n^{\star}\in\{1,\dots,N\}$, $2^{N}-1$ different subsets of $\mathbb{R}_{+}^{N}$
may be defined as}
\[
\mathcal{M}_{\mathcal{I}}=\{\ell\in\mathbb{R}_{+}^{N}\mid x_{k}^{\star}\left(\ell\right)>0,\ k\in\mathcal{I},\ x_{j}^{\star}\left(\ell\right)=0,\ j\notin\mathcal{I}\}
\]
where 
\[
\mathcal{I}\subseteq\{1,2,\dots,N\},\ \mathcal{I}\neq\varnothing
\]
is the set of indexes of the non-zero entries of $x^{\star}$.

In each region $\mathcal{M}_{\mathcal{I}}$, a linear expression
of the solution of the optimization problem \eqref{eq:xstar} can
be obtained. Consider some $\mathcal{I}$ and
the associated region $\mathcal{M}_{\mathcal{I}}$. We assume without
loss of generality that the last $N-n^{\star}$ elements of $x^{\star}$
are null for all $\ell\in\mathcal{M}_{\mathcal{I}}$. From
\eqref{eq:waterload_bis} and \eqref{eq:waterlevel}, one has 
\begin{equation}
x_{j}^{\star}\left(\ell\right)=\begin{cases}
\frac{\sum_{i=1}^{n^{\star}}\ell_{i}+E}{n^{\star}}-\ell_{j} & j\leqslant n^{\star}\\
0 & j>n^{\star}
\end{cases}\label{eq:xstar_l}
\end{equation}
the Jacobian matrix of $x^{\star}\left(\ell\right)$ is 
\begin{equation}
\begin{aligned} & \boldsymbol{H}\left(\ell\right)=\\
 & \left(\begin{array}{ccccccc}
-1+\frac{1}{n^{\star}} & \frac{1}{n^{\star}} & \text{\ensuremath{\cdots}} & \frac{1}{n^{\star}} & 0 & \cdots & 0\\
\frac{1}{n^{\star}} & -1+\frac{1}{n^{\star}} & \cdots & \vdots & \vdots &  & \vdots\\
\vdots & \vdots &  & \vdots & \vdots &  & \vdots\\
\frac{1}{n^{\star}} & \cdots & \cdots & -1+\frac{1}{n^{\star}} & 0 & \cdots & 0\\
0 & \cdots & \cdots & 0 & 0 & \cdots & 0\\
\vdots &  &  & \vdots & \vdots &  & \vdots\\
0 & \cdots & \cdots & 0 & 0 & \cdots & 0
\end{array}\right).
\end{aligned}
\end{equation}
Using \eqref{eq:xstar_l} and the expression of $\boldsymbol{H}\left(\ell\right)$,
one obtains
\[
x^{\star}\left(\ell\right)=\boldsymbol{H}\left(\ell\right)\ell+b\left(\ell\right).
\]

\subsection{Proof of Proposition \ref{prop:gradient}\label{subsec:Proof-gradient}}

From \eqref{eq:GammaB}, when $p\in\mathbb{N}^{+}$ 
\[
\begin{aligned}\frac{\partial\widehat{\Gamma}_T}{\partial\boldsymbol{B}}\left(\boldsymbol{B}\right)=- & \frac{1}{T}\sum_{i=1}^{T}2\left(u\left(x^{\star}\left(\ell^{\left(i\right)}\right);\ell^{\left(i\right)}\right)-u\left(x^{\star}\left(\widehat{\ell}^{\left(i\right)}\right);\ell^{\left(i\right)}\right)\right)\frac{\partial\mathcal{G}_{i}}{\partial\boldsymbol{B}}\end{aligned}
\]
where 
\[
\begin{aligned}\mathcal{G}_{i} & =u\left(x^{\star}\left(\widehat{\ell}^{\left(i\right)}\right);\ell^{\left(i\right)}\right)\\
 & =\left(\mathbf{1}^{\mathrm{T}}\cdot\underbrace{\left(x^{\star}\left(\widehat{\ell}^{\left(i\right)}\right)+\ell^{\left(i\right)}\right)\odot\cdots\odot\left(x^{\star}\left(\widehat{\ell}^{\left(i\right)}\right)+\ell^{\left(i\right)}\right)}_{p}\right)^{\frac{1}{p}}
\end{aligned}
\]
with $\mathbf{1}=\left[1,\dots,1\right]^{\mathrm{T}}$ the vector
of $N$ ones. Then
\[
\begin{aligned}d\mathcal{G}_{i}= & \frac{1}{p}\|x^{\star}\left(\widehat{\ell}^{\left(i\right)}\right)+\ell^{\left(i\right)}\|_{p}^{1-p}*\\
 & d\left(\mathbf{1}^{\mathrm{T}}\underbrace{\left(x^{\star}\left(\widehat{\ell}^{\left(i\right)}\right)+\ell^{\left(i\right)}\right)\odot\cdots\odot\left(x^{\star}\left(\widehat{\ell}^{\left(i\right)}\right)+\ell^{\left(i\right)}\right)}_{p}\right)\\
= & \frac{1}{p}\|x^{\star}\left(\widehat{\ell}^{\left(i\right)}\right)+\ell^{\left(i\right)}\|_{p}^{1-p}*p*\beta_{i}^{\mathrm{T}}d\left(x^{\star}\left(\widehat{\ell}^{\left(i\right)}\right)+\ell^{\left(i\right)}\right)\\
= & \|x^{\star}\left(\widehat{\ell}^{\left(i\right)}\right)+\ell^{\left(i\right)}\|_{p}^{1-p}\beta_{i}^{\mathrm{T}}d\left(\left(\boldsymbol{H}_{i}\boldsymbol{B}^{\mathrm{T}}\boldsymbol{B}+\boldsymbol{I}\right)\ell^{\left(i\right)}+b_{i}\right)\\
= & \|x^{\star}\left(\widehat{\ell}^{\left(i\right)}\right)+\ell^{\left(i\right)}\|_{p}^{1-p}\beta_{i}^{\mathrm{T}}\left(\boldsymbol{H}_{i}d\boldsymbol{B}^{\mathrm{T}}\boldsymbol{B}\ell^{\left(i\right)}+\boldsymbol{H}_{i}\boldsymbol{B}^{\mathrm{T}}d\boldsymbol{B}\ell^{\left(i\right)}\right)
\end{aligned}
\]
where $\boldsymbol{H}_{i}=\boldsymbol{H}\left(\widehat{\ell}^{\left(i\right)}\right)$,
$b_{i}=b\left(\widehat{\ell}^{\left(i\right)}\right)$, and 
\[
\beta_{i}=\text{\ensuremath{\underbrace{\left(x^{\star}\left(\widehat{\ell}^{\left(i\right)}\right)+\ell^{\left(i\right)}\right)\odot\cdots\odot\left(x^{\star}\left(\widehat{\ell}^{\left(i\right)}\right)+\ell^{\left(i\right)}\right)}_{p-1}}}.
\]

As $d\mathcal{G}_{i}$ is a scalar, one has 
\[
\begin{aligned}d\mathcal{G}_{i}= & \text{Tr}\left(d\mathcal{G}_{i}\right)\\
= & \|x^{\star}\left(\widehat{\ell}^{\left(i\right)}\right)+\ell^{\left(i\right)}\|_{p}^{1-p} \left(\text{Tr}\left(\beta_{i}^{\mathrm{T}}\boldsymbol{H}_{i}d\boldsymbol{B}^{\mathrm{T}}\boldsymbol{B}\ell^{\left(i\right)}\right)+\text{Tr}\left(\beta_{i}^{\mathrm{T}}\boldsymbol{H}_{i}\boldsymbol{B}^{\mathrm{T}}d\boldsymbol{B}\ell^{\left(i\right)}\right)\right)\\
= & \|x^{\star}\left(\widehat{\ell}^{\left(i\right)}\right)+\ell^{\left(i\right)}\|_{p}^{1-p} \left(\text{Tr}\left(\boldsymbol{B}\ell^{\left(i\right)}\beta_{i}^{\mathrm{T}}\boldsymbol{H}_{i}d\boldsymbol{B}^{\mathrm{T}}\right)+\text{Tr}\left(\ell^{\left(i\right)}\beta_{i}^{\mathrm{T}}\boldsymbol{H}_{i}\boldsymbol{B}^{\mathrm{T}}d\boldsymbol{B}\right)\right)\\
= & \|x^{\star}\left(\widehat{\ell}^{\left(i\right)}\right)+\ell^{\left(i\right)}\|_{p}^{1-p} \text{Tr}\left(\left(\boldsymbol{B}\ell^{\left(i\right)}\beta_{i}^{\mathrm{T}}\boldsymbol{H}_{i}+\boldsymbol{B}\boldsymbol{H}_{i}^{\mathrm{T}}\beta_{i}\ell^{\left(i\right)T}\right)^{\mathrm{T}}d\boldsymbol{B}\right).
\end{aligned}
\]
Then, one can deduce 
\[
\frac{\partial\mathcal{G}_{i}}{\partial\boldsymbol{B}}=\|x^{\star}\left(\widehat{\ell}^{\left(i\right)}\right)+\ell^{\left(i\right)}\|_{p}^{1-p}\left(\boldsymbol{B}\ell^{\left(i\right)}\beta_{i}^{\mathrm{T}}\boldsymbol{H}_{i}+\boldsymbol{B}\boldsymbol{H}_{i}^{\mathrm{T}}\beta_{i}\ell^{\left(i\right)T}\right)
\]
and
\begin{equation}
\frac{\partial\widehat{\Gamma}_T}{\partial\boldsymbol{B}}\left(\boldsymbol{B}\right)=-\frac{1}{T}\sum_{i=1}^{\mathrm{T}}C_{i}\left(\boldsymbol{B}\ell^{\left(i\right)}\beta_{i}^{\mathrm{T}}\boldsymbol{H}_{i}+\boldsymbol{B}\boldsymbol{H}_{i}^{\mathrm{T}}\beta_{i}\ell^{\left(i\right)T}\right)
\end{equation}
where 
\begin{equation}
\begin{aligned}C_{i}=2 & \left(u\left(x^{\star}\left(\ell^{\left(i\right)}\right);\ell^{\left(i\right)}\right)-u\left(x^{\star}\left(\widehat{\ell}^{\left(i\right)}\right);\ell^{\left(i\right)}\right)\right)||x^{\star}\left(\widehat{\ell}^{\left(i\right)}\right)+\ell^{\left(i\right)}||_{p}^{1-p}.\end{aligned}
\end{equation}

When $p\rightarrow+\infty$, from \eqref{eq:GammaB}, we have 
\[
\begin{aligned}\frac{\partial\widehat{\Gamma}_T}{\partial\boldsymbol{B}}\left(\boldsymbol{B}\right) = & -\frac{1}{T}\sum_{i=1}^{T}2\left(u\left(x^{\star}\left(\ell^{\left(i\right)}\right);\ell^{\left(i\right)}\right)-u\left(x^{\star}\left(\widehat{\ell}^{\left(i\right)}\right);\ell^{\left(i\right)}\right)\right)\frac{\partial\mathcal{F}_{i}}{\partial\boldsymbol{B}}\end{aligned}
\]
with 
\[
\mathcal{F}_{i}=\max_{j}\left(\widehat{x}_{j}^{\left(i\right)}+\ell_{j}^{\left(i\right)}\right)
\]
and $x^{\star}(\widehat{\ell}^{\left(i\right)})=\left[\widehat{x}_{1}^{\left(i\right)},\widehat{x}_{2}^{\left(i\right)},\cdots,\widehat{x}_{N}^{\left(i\right)}\right]^{\mathrm{T}}$.

Furthermore, we assume for $i$-th vector, the $k\left(i\right)$-th
elements is maximum. Then 
\[
\begin{aligned}\mathcal{F_{\textmd{\textit{i}}}}= & s_{k\left(i\right)}^{\mathrm{T}}\left(x\left(\widehat{\ell}^{\left(i\right)}\right)+\ell^{\left(i\right)}\right)\\
= & s_{k\left(i\right)}^{\mathrm{T}}\left(\left(\boldsymbol{H}_{i}\boldsymbol{B}^{\mathrm{T}}\boldsymbol{B}+\boldsymbol{I}\right)\ell^{\left(i\right)}+b_{i}\right)
\end{aligned}
\]
with 
\[
s_{k\left(i\right)}=\left(\begin{array}{c}
0_{(k\left(i\right)-1)\times1}\\
1\\
0_{(N-k\left(i\right))\times1}
\end{array}\right).
\]

We have 
\[
\begin{aligned}d\mathcal{F}_{\textmd{\textit{i}}} & =s_{k\left(i\right)}^{\mathrm{T}}\left(\boldsymbol{H}_{i}d\boldsymbol{B}^{\mathrm{T}}\boldsymbol{B}\ell^{\left(i\right)}+\boldsymbol{H}_{i}\boldsymbol{B}^{\mathrm{T}}d\boldsymbol{B}\ell^{\left(i\right)}\right).\end{aligned}
\]
\[
\begin{aligned}\text{Tr}\left(d\mathcal{F}_{\textmd{\textit{i}}}\right) & =\text{Tr}\left(\left(\boldsymbol{B}\ell^{\left(i\right)}s_{k\left(i\right)}^{\mathrm{T}}\boldsymbol{H}_{i}+\boldsymbol{B}\boldsymbol{H}_{i}^{\mathrm{T}}s_{k\left(i\right)}\ell^{\left(i\right)T}\right)^{\mathrm{T}}d\boldsymbol{B}\right)\end{aligned}
.
\]
Consequently,
\[
\frac{\partial\mathcal{F}_{i}}{\partial\boldsymbol{B}}=\boldsymbol{B}\ell^{\left(i\right)}s_{k\left(i\right)}^{\mathrm{T}}\boldsymbol{H}_{i}+\boldsymbol{B}\boldsymbol{H}_{i}^{\mathrm{T}}s_{k\left(i\right)}\ell^{\left(i\right)T}
\]
and
\begin{equation}
\frac{\partial\widehat{\Gamma}_T}{\partial\boldsymbol{B}}\left(\boldsymbol{B}\right)=-\frac{1}{T}\sum_{i=1}^{T}D_{i}\left(\boldsymbol{B}\ell^{\left(i\right)}s_{k\left(i\right)}^{\mathrm{T}}\boldsymbol{H}_{i}+\boldsymbol{B}\boldsymbol{H}_{i}^{\mathrm{T}}s_{k\left(i\right)}\ell^{\left(i\right)T}\right)
\end{equation}
with 
\begin{equation}
D_{i}=2\left(u\left(x^{\star}\left(\ell^{\left(i\right)}\right);\ell^{\left(i\right)}\right)-u\left(x^{\star}\left(\widehat{\ell}^{\left(i\right)}\right);\ell^{\left(i\right)}\right)\right).
\end{equation}




\bibliographystyle{elsarticle-num} 
\bibliography{IEEETran}

\begin{thebibliography}{10}
\expandafter\ifx\csname url\endcsname\relax
  \def\url#1{\texttt{#1}}\fi
\expandafter\ifx\csname urlprefix\endcsname\relax\def\urlprefix{URL }\fi
\expandafter\ifx\csname href\endcsname\relax
  \def\href#1#2{#2} \def\path#1{#1}\fi

\bibitem{compressionsmartgridstcheou2014}
M.~P. Tcheou, L.~Lovisolo, M.~V. Ribeiro, E.~A.~B. da~Silva, M.~A.~M.
  Rodrigues, J.~M.~T. Romano, P.~S.~R. Diniz, The compression of electric
  signal waveforms for smart grids: State of the art and future trends, IEEE
  Transactions on Smart Grid 5~(1) (2014) 291--302.

\bibitem{smartgridgeneratingaiello2014}
M.~Aiello, G.~A. Pagani, The smart grid's data generating potentials, in: 2014
  Federated Conference on Computer Science and Information Systems, 2014, pp.
  9--16.

\bibitem{losslesskraus2012}
J.~Kraus, P.~Štěpán, L.~Kukačka, Optimal data compression techniques for
  smart grid and power quality trend data, in: 2012 IEEE 15th International
  Conference on Harmonics and Quality of Power, 2012, pp. 707--712.

\bibitem{gerek2008compression}
{\"O}.~N. Gerek, D.~G. Ece, Compression of power quality event data using 2d
  representation, Electric Power Systems Research 78~(6) (2008) 1047--1052.

\bibitem{efficientcompressionzhang2011}
M.~Zhang, K.~Li, Y.~Hu, A high efficient compression method for power quality
  applications, IEEE Transactions on Instrumentation and Measurement 60~(6)
  (2011) 1976--1985.

\bibitem{adaptivemultivariatechowdhury2021}
M.~R. Chowdhury, S.~Tripathi, S.~De, Adaptive multivariate data compression in
  smart metering internet of things, IEEE Transactions on Industrial
  Informatics 17~(2) (2021) 1287--1297.
\newblock \href {https://doi.org/10.1109/TII.2020.2981382}
  {\path{doi:10.1109/TII.2020.2981382}}.

\bibitem{quantizationgray1998}
R.~Gray, D.~Neuhoff, Quantization, IEEE Transactions on Information Theory
  44~(6) (1998) 2325--2383.
\newblock \href {https://doi.org/10.1109/18.720541}
  {\path{doi:10.1109/18.720541}}.

\bibitem{lossyberger1998}
T.~Berger, J.~Gibson, Lossy source coding, IEEE Transactions on Information
  Theory 44~(6) (1998) 2693--2723.
\newblock \href {https://doi.org/10.1109/18.720552}
  {\path{doi:10.1109/18.720552}}.

\bibitem{zhang2017payoff}
C.~Zhang, N.~Khalfet, S.~Lasaulce, V.~Varma, S.~Tarbouriech, Payoff-oriented
  quantization and application to power control, in: 2017 15th International
  Symposium on Modeling and Optimization in Mobile, Ad Hoc, and Wireless
  Networks (WiOpt), IEEE, 2017, pp. 1--6.

\bibitem{shlezinger2019hardware}
N.~Shlezinger, Y.~C. Eldar, M.~R. Rodrigues, Hardware-limited task-based
  quantization, IEEE Transactions on Signal Processing 67~(20) (2019)
  5223--5238.

\bibitem{popovski2020semantic}
P.~Popovski, O.~Simeone, F.~Boccardi, D.~G{\"u}nd{\"u}z, O.~Sahin,
  Semantic-effectiveness filtering and control for post-5g wireless
  connectivity, Journal of the Indian Institute of Science 100 (2020) 435--443.

\bibitem{whatsemanticlan2021}
Q.~Lan, D.~Wen, Z.~Zhang, Q.~Zeng, X.~Chen, P.~Popovski, K.~Huang, What is
  semantic communication? {A} view on conveying meaning in the era of machine
  intelligence, Journal of Communications and Information Networks 6~(4) (2021)
  336--371.

\bibitem{semanticspkalfa2021}
M.~Kalfa, M.~Gok, A.~Atalik, B.~Tegin, T.~M. Duman, O.~Arikan, Towards
  goal-oriented semantic signal processing: Applications and future challenges,
  Digital Signal Processing 119 (2021) 103134.

\bibitem{6Gstrinati2021}
E.~C. Strinati, S.~Barbarossa, {6G} networks: Beyond {S}hannon towards semantic
  and goal-oriented communications, Computer Networks 190 (2021) 107930.

\bibitem{gunduz2022beyond}
D.~G{\"u}nd{\"u}z, Z.~Qin, I.~E. Aguerri, H.~S. Dhillon, Z.~Yang, A.~Yener,
  K.~K. Wong, C.-B. Chae, Beyond transmitting bits: Context, semantics, and
  task-oriented communications, IEEE Journal on Selected Areas in
  Communications 41~(1) (2022) 5--41.

\bibitem{bar1953semantic}
Y.~Bar-Hillel, R.~Carnap, Semantic information, The British Journal for the
  Philosophy of Science 4~(14) (1953) 147--157.

\bibitem{goldreich2012theory}
O.~Goldreich, B.~Juba, M.~Sudan, A theory of goal-oriented communication,
  Journal of the ACM (JACM) 59~(2) (2012) 1--65.

\bibitem{kountouris2021semantics}
M.~Kountouris, N.~Pappas, Semantics-empowered communication for networked
  intelligent systems, IEEE Communications Magazine 59~(6) (2021) 96--102.

\bibitem{chaccour2022less}
C.~Chaccour, W.~Saad, M.~Debbah, Z.~Han, H.~V. Poor, Less data, more knowledge:
  Building next generation semantic communication networks, arXiv preprint
  arXiv:2211.14343 (2022).

\bibitem{getu2023making}
T.~M. Getu, G.~Kaddoum, M.~Bennis, Making sense of meaning: A survey on metrics
  for semantic and goal-oriented communication, IEEE Access (2023).

\bibitem{roucos1983segment}
S.~Roucos, R.~Schwartz, J.~Makhoul, A segment vocoder at 150 b/s, in:
  ICASSP'83. IEEE International Conference on Acoustics, Speech, and Signal
  Processing, Vol.~8, IEEE, 1983, pp. 61--64.

\bibitem{soong1989phonetically}
F.~K. Soong, A phonetically labeled acoustic segment (plas) approach to speech
  analysis-synthesis, in: International Conference on Acoustics, Speech, and
  Signal Processing,, IEEE, 1989, pp. 584--587.

\bibitem{hirata1989loobit}
Y.~Hirata, S.~Nakagawa, A 100bit/s speech coding using a speech recognition
  technique, in: First European Conference on Speech Communication and
  Technology, 1989.

\bibitem{lopes200340}
C.~V. Lopes, A.~Chadha, A 40 bps speech coding scheme, in: GLOBECOM'03. IEEE
  Global Telecommunications Conference (IEEE Cat. No. 03CH37489), Vol.~4, IEEE,
  2003, pp. 2223--2226.

\bibitem{zhang2021decision}
C.~Zhang, S.~Lasaulce, M.~Hennebel, L.~Saludjian, P.~Panciatici, H.~V. Poor,
  Decision-making oriented clustering: Application to pricing and power
  consumption scheduling, Applied Energy 297 (2021) 117106.

\bibitem{zou2018decision}
H.~Zou, C.~Zhang, S.~Lasaulce, L.~Saludjian, P.~Panciatici, Decision-oriented
  communications: Application to energy-efficient resource allocation, in: 2018
  6th International Conference on Wireless Networks and Mobile Communications
  (WINCOM), IEEE, 2018, pp. 1--6.

\bibitem{shlezinger2021deep}
N.~Shlezinger, Y.~C. Eldar, Deep task-based quantization, Entropy 23~(1) (2021)
  104.

\bibitem{zou2023goal}
H.~Zou, C.~Zhang, S.~Lasaulce, L.~Saludjian, H.~V. Poor, Goal-oriented
  quantization: Analysis, design, and application to resource allocation, IEEE
  Journal on Selected Areas in Communications 41~(1) (2023) 42--54.

\bibitem{mostaani2022task}
A.~Mostaani, T.~X. Vu, S.~Chatzinotas, B.~Ottersten, Task-oriented data
  compression for multi-agent communications over bit-budgeted channels, IEEE
  Open Journal of the Communications Society 3 (2022) 1867--1886.

\bibitem{sun2019new}
Y.~Sun, H.~Zou, S.~Lasaulce, M.~Kieffer, L.~Saludjian, A new approach of data
  pre-processing for data compression in smart grids, in: 2019 International
  Conference on Wireless Networks and Mobile Communications (WINCOM), IEEE,
  2019, pp. 1--6.

\bibitem{EVchargingBeaude2015}
O.~Beaude, S.~Lasaulce, M.~Hennebel, J.~Daafouz, Minimizing the impact of {EV}
  charging on the electricity distribution network, in: 2015 European Control
  Conference (ECC), 2015, pp. 648--653.

\bibitem{waterfillingshinwari2012}
M.~Shinwari, A.~Youssef, W.~Hamouda, A water-filling based scheduling algorithm
  for the smart grid, IEEE Transactions on Smart Grid 3~(2) (2012) 710--719.

\bibitem{distributedwei2002}
W.~Yu, G.~Ginis, J.~Cioffi, Distributed multiuser power control for digital
  subscriber lines, IEEE Journal on Selected Areas in Communications 20~(5)
  (2002) 1105--1115.

\bibitem{mallat1999wavelet}
S.~Mallat, A wavelet tour of signal processing, Elsevier, 1999.

\bibitem{balle_end--end_2017}
J.~Ball{\'e}, V.~Laparra, E.~P. Simoncelli, End-to-end optimized image
  compression, arXiv preprint arXiv:1611.01704 (2016).

\bibitem{balle_variational_2018}
J.~Ball{\'e}, D.~Minnen, S.~Singh, S.~J. Hwang, N.~Johnston, Variational image
  compression with a scale hyperprior, arXiv preprint arXiv:1802.01436 (2018).

\bibitem{lee_context-adaptive_2019}
J.~Lee, S.~Cho, S.-K. Beack, Context-adaptive entropy model for end-to-end
  optimized image compression, arXiv preprint arXiv:1809.10452 (2018).

\bibitem{algorithm1980linde}
Y.~Linde, A.~Buzo, R.~Gray, An algorithm for vector quantizer design, IEEE
  Transactions on communications 28~(1) (1980) 84--95.

\bibitem{ausgrid}
Ausgrid,
  \href{https://www.ausgrid.com.au/Industry/Innovation-and-research/Data-to-share}{Solar
  home half-hour data - 1 july 2012 to 30 june 2013} (2014).
\newline\urlprefix\url{https://www.ausgrid.com.au/Industry/Innovation-and-research/Data-to-share}

\bibitem{TensorflowCompTut}
{Tensorflow} learned data compression,
  \url{https://www.tensorflow.org/tutorials/generative/data_compression}.

\bibitem{dai2022communication}
J.~Dai, P.~Zhang, K.~Niu, S.~Wang, Z.~Si, X.~Qin, Communication beyond
  transmitting bits: Semantics-guided source and channel coding, IEEE Wireless
  Communications (2022).

\bibitem{zhang2023highly}
Y.~Zhang, H.~Zhao, K.~Cao, L.~Zhou, Z.~Wang, Y.~Liu, J.~Wei, A highly reliable
  encoding and decoding communication framework based on semantic information,
  Digital Communications and Networks (2023).

\end{thebibliography}





\end{document}